\let\csname equation*\endcsname\relax
\let\csname endequation*\endcsname\relax
\numberwithin{equation}{section}
\begin{document}
\newtheorem{Thm}{Theorem}[section]
\newtheorem{Cor}[Thm]{Corollary}
\newtheorem{Lem}[Thm]{Lemma}
\newtheorem{Prop}[Thm]{Proposition}
\newtheorem{Def}[Thm]{Definition}
\newtheorem{Rem}[Thm]{Remark}
\newtheorem{Assump}[Thm]{Assumption}

\newtheorem{Exam}[Thm]{Example}
\newenvironment{proof}[1][Proof]{\noindent\textbf{#1.} }{\ \rule{0.5em}{0.5em}}

\title[Dynamic Thresholding Algorithm]{Dynamic Thresholding Algorithm with Memory for Linear Inverse Problems}

\author{Zhong-Feng Sun$^1$, Yun-Bin Zhao$^2$, Jin-Chuan Zhou$^1$ and  Zheng-Hai Huang$^3$}

\address{$^1$School of Mathematics and Statistics, Shandong
University of Technology, Zibo, Shandong,  China.}
\address{$^2$Corresponding author. Shenzhen International Center for Industrial and Applied Mathematics, SRIBD, The Chinese University of Hong Kong, Shenzhen,
 China. }
\address{$^3$School of Mathematics, Tianjin University, Tianjin, China. }
\ead{zfsun@sdut.edu.cn, yunbinzhao@cuhk.edu.cn, jinchuanzhou@sdut.edu.cn and  huangzhenghai@tju.edu.cn}
\vspace{10pt}
\begin{indented}
\item[](First version April 1, 2024; Revised, Nov. 11, 2024)
\end{indented}

\begin{abstract}
The relaxed optimal $k$-thresholding pursuit (ROTP) is a recent algorithm for linear inverse problems. This algorithm is based on the optimal $k$-thresholding technique which performs vector thresholding and error metric reduction simultaneously. Although ROTP can be used to solve small to medium-sized linear inverse problems, the computational cost  of this algorithm is high when solving large-scale problems. By merging the optimal $k$-thresholding technique and iterative method with memory as well as  optimization with sparse search directions, we propose the so-called dynamic thresholding algorithm with memory (DTAM), which iteratively and dynamically selects vector bases to  construct the problem solution. At every step, the algorithm uses more than one or all iterates generated so far to construct a new search direction, and solves only the small-sized quadratic subproblems at every iteration. Thus the computational complexity of DTAM is remarkably lower than that of ROTP-type methods.
It turns out that DTAM can locate the solution of linear inverse problems if the matrix involved satisfies the restricted isometry property.  Experiments on synthetic data, audio signal reconstruction and image denoising demonstrate that the proposed algorithm performs comparably to several mainstream thresholding and greedy algorithms, and it works much faster than the ROTP-type algorithms especially when the sparsity level of signal is relatively low.\\

\end{abstract}

\noindent{\it Keywords}: Linear inverse problems, Thresholding algorithm, Restricted isometry property, Signal reconstruction, Image denoising.

\section{Introduction}
 A typical linear inverse problem is to reconstruct unknown data $ d\in \mathbb{R}^n$ via some linear measurements $y \in \mathbb{R}^m $ subject to noise effects:
\begin{align}\label{LIP}
y=Bd+\nu,
\end{align}
where $B \in \mathbb{R}^{m\times n} $ is a given measurement matrix with  $m \ll n $,  and $\nu \in \mathbb{R}^m $ is a noise vector. This problem arises in many scenarios, where the number of measurements $m$ is much smaller than the length of the target vector $d.$  For instance, when using CT for medical diagnosis, it is expected to use as little radiation dose as possible in order to reduce the impact of radiation on the patient. Also, in the same and many other application scenarios, the target signal often admits certain special structure that makes it possible to reconstruct the signal from the underdetermined system (\ref{LIP}). In fact, many natural signals and images can be sparsely represented under some orthogonal linear transforms (e.g.,  discrete wavelet transforms). As a result, we may assume that the target data $d$ can be represented as  $d=\Phi^T x$, where $\Phi\in \mathbb{R}^{n\times n} $ is a transform matrix and the vector $x\in\mathbb{R}^{ n}$ is sparse (or compressible in the sense that it can be approximated by a sparse vector). In such cases, reconstructing $d$  via solving the linear inverse problem \eqref{LIP} amounts to recovering a sparse (or compressible) vector $x$ through the following system:
\begin{align}\label{SLIP}
y=Ax+\nu,
\end{align}
where $A=B\Phi^T \in \mathbb{R}^{m\times n}$ is still called the measurement matrix. As the solution $x$ of this problem is sparse, the problem above can be referred to as a sparse linear inverse problem.  This problem has a wide range of applications in such areas as image processing \cite{LAZS20,XDY21}, wireless
communication \cite{BSR17,CZJAZ21,LFL18}, sensor networks  \cite{CW16C,CW16O}, to name a few. The system (\ref{SLIP}) can be reformulated as the sparse optimization problem
 \begin{align}\label{model}
\underset{x\in \mathbb{R}^n}{\min}\{ {\left\| y-Ax\right\|}_2^2: \left\| x\right\|_0\leq k\},
\end{align}
where $ k$  is a given  integer number reflecting the sparsity level of $x,$  and $ \left\| \cdot\right\|_0$ denotes the number of nonzero entries of a vector. For the convenience of discussion, we list the main abbreviations used in the paper in Table \ref{tab-abbr}.
\begin{center}
   \begin{table}[hbtp]
 \caption{\label{tab-abbr}List of Abbreviations}
 \vspace{0.2cm}
 \begin{indented}
 \item[]
 \begin{tabular}{ll}
  \hline
 Abbreviation& Full Name\\
 \hline
DTAM & Dynamic thresholding algorithm with memory\\
DWT&  Discrete wavelet transform\\
 EDOMP&Enhanced dynamic orthogonal matching pursuit \cite{ZL23}\\
gOMP&Generalized orthogonal matching pursuit  \cite{WKS12}\\
NTP& Natural thresholding pursuit \cite{ZL22} \\
OMP& Orthogonal matching pursuit \cite{E10,TG07}\\
PGROTP&Partial gradient relaxed
optimal $k$-thresholding pursuit  \cite{MZKS22} \\
PSNR&Peak signal-to-noise ratio \\
RIC& Restricted isometry constant\\
RIP&Restricted isometry property \\
ROTP&Relaxed optimal $k$-thresholding pursuit  \cite{Z20}\\
ROTP$\omega$ & ROTP with $\omega$  times of data compression at each iteration \cite{Z20}\\
 SNR&Signal-to-noise ratio\\
SP& Subspace pursuit \cite{DM09}\\
StOMP&Stagewise
orthogonal matching pursuit  \cite{DTDS12}\\
\hline
 \end{tabular}
  \end{indented}
 \end{table}
\end{center}

 Thresholding is a large class of widely used algorithms for sparse optimization problems \eqref{model}. This class of algorithms includes the hard thresholding \cite{BD08,BD10,F11,KC14,MZ20,SZZM22,WHZ23}, optimal $k$-thresholding \cite{MZ22,MZKS22,SZZ23,Z20,ZL21}, soft thresholding \cite{ BT09,BSR17,DDM04,
 D95,E06,LZC16,YOG08}, and the recent natural thresholding pursuit \cite{ZL22}.
Although the hard thresholding selecting indices of a few  largest magnitudes of a vector can guarantee the iterates generated by the algorithm are  feasible to \eqref{model}, it is generally not an optimal thresholding approach from the viewpoint of minimizing the error metric $\|y-Ax\|_2^2,$  as pointed out in \cite{Z20}. Thus a more sophisticated data compression method called the  optimal  $k$-thresholding was first introduced in  \cite{Z20}, based on which the family of optimal $k$-thresholding algorithms, termed ROTP$\omega,$  were  proposed in \cite{Z20}, where  $\omega$ reflects the times of data compression in every iteration. Although ROTP$\omega$  is generally more stable and robust for solving linear inverse problems than  hard thresholding and greedy algorithms \cite{Z20,ZL21},  its computational cost  remains high since the algorithm needs to solve  quadratic optimization subproblems in the course of iteration. To reduce the cost, some modifications of ROTP$\omega$ using acceleration or linearization  techniques have been proposed  recently \cite{GLMNQ23,MZKS22,SZZ23,ZL22}. For instance,   PGROTP \cite{MZKS22} and the heavy-ball-based ROTP \cite{SZZ23} were developed by incorporating the partial gradient and heavy-ball acceleration into ROTP (ROTP$\omega$  with $\omega=1$), respectively.  Numerical results  indicate that PGROTP can be faster than  ROTP2 \cite{MZKS22}.  However, PGROTP is still time-consuming when solving large-scale problems. It is worth mentioning that the  natural thresholding pursuit (NTP)   in \cite{ZL22}, using linearization of quadratic subproblem,  remarkably reduces the complexity of ROTP-type algorithms. In addition, the stochastic counterpart of NTP was recently developed in \cite{GLMNQ23} for sparse optimization problems.

Except for thresholding algorithms,  the greedy methods are also a popular class of  algorithms for solving sparse linear inverse problems. OMP is one of such greedy algorithms \cite{E10,TG07}  which gradually identifies the support of solution to the problem by selecting  only one index in each iteration.  The index selected by OMP corresponds to  the  largest absolute component of the gradient of error metric, i.e., the objective function in (\ref{model}).
The OMP and its modified versions were analyzed in such references as \cite{CW14,DW10,Mo15}. However, theoretical and numerical results indicate that OMP tends to be inefficient as  the sparsity level $k$ becomes large.  The main reason for this might be that when $k$ is relatively large and when the large magnitudes are close to each other, there is no guarantee for a correct index being selected by the OMP procedure, and many significant indices corresponding to large magnitudes in gradient are completely discarded at every iteration. This means most useful information conveyed by the gradient of the current iterate is ignored in OMP procedure. Motivated by this observation, several modifications of OMP with different index selection criteria were introduced, including gOMP \cite{WKS12}, StOMP \cite{DTDS12}, EDOMP \cite{ZL23} and SP \cite{DM09}.  For instance, at every iteration, gOMP picks a fixed number,  $K,$ of the largest magnitudes of gradient.  However, such a selection rule might   result in a wrong index set especially when the gradient is $s$-sparse with $s<K$ since in such a case the algorithm have to pick more indices than necessary. On the contrary, StOMP and EDOMP adopt certain dynamic index selection criteria whose purpose is to efficiently use the information of significant gradient components.  EDOMP is generally stable, robust and efficient for sparse signal recovery, although the convergence of EDOMP has not yet established at present \cite{ZL23}.

Inspired by the dynamic index selection strategies in StOMP \cite{DTDS12} and EDOMP \cite{ZL23} and iterative methods with memory \cite{ABD23,LZG23,NSH21},  we propose a new algorithm called dynamic thresholding algorithm with memory (DTAM) in this paper. The algorithm is different from existing ones in three aspects:  (i) The iterative search direction in this method is a combination of the gradients of more than one or all iterates generated so far by the algorithm instead of the only gradient for the current iterate. (ii) The index selection in this algorithm is dynamic according to a rule defined by a generalized  mean function \cite{ZFL06} evaluated at the current search direction with memory.  It should be pointed out that the generalized mean function is used for the first time to serve such a purpose. (iii)  The algorithm adopts a novel dimensionality reduction strategy based on the sparsity of  iterative point and search direction. The key idea here is to reduce a  high-dimensional  quadratic optimization problem to a low-dimensional one whose dimension is at most twice of the sparsity level of the solution to the linear inverse problem. We also carry out a rigorous analysis of DTAM to establish an error bound which measures the distance between the solution of the problem  and iterates generated by the algorithm.  The error bound is established under  the  restricted isometry property (RIP). It implies that DTAM is guaranteed to locate the $k$-sparse solution of linear inverse problem  if the matrix satisfies the RIP of order $3k.$ Moreover, as a byproduct of our analysis, the convergence of PGROTP  with $\bar{q}=k$ is also obtained in this paper for the  first time, which is given in Corollary \ref{Cor-PG}. The numerical performances of DTAM and several existing algorithms including  PGROTP \cite{MZKS22}, NTP \cite{ZL22}, StOMP  \cite{DTDS12}, SP \cite{DM09} and OMP \cite{E10,TG07} are compared through experiments on threes types of sparse linear inverse problems:  The problems with synthetic data, practical audio signal reconstruction and image denoising. Numerical results indicate that the proposed algorithm  does perform very well for solving linear inverse problems compared with several existing algorithms, and it works faster than PGROTP.

The paper is organized as follows. In Section \ref{sec-prelim},  we introduce some useful inequalities,  generalized mean functions, the PGROTP algorithm, and the new algorithm DTAM. The analysis of DTAM is performed in Section \ref{convergence}. Numerical results are reported in Section \ref{Num-exp}, and the conclusions are given in last section.

\section{Preliminary and Algorithms}\label{sec-prelim}

Some notations that will be used in the paper are summarized in Table \ref{table-notation}.
\begin{center}
\begin{table}[htbp]
 \renewcommand\arraystretch{1.1}
\caption{List of Notations}\label{table-notation}
 \vspace{0.2cm}
 \begin{indented}
 \item[]
 \begin{tabular}{ll}
 \hline
 Notation& Definition\\
\hline
$\mathbb{R}_{++}^n$& The positive orthant of $\mathbb{R}^n$\\
$N$ & Index set $ \{1,2,\ldots,n\} $\\
  $|\Omega|$ &Cardinality of the set $\Omega\subseteq N$\\
 $\overline{\Omega}$ &Complement set  of $\Omega\subseteq N$, i.e., $\overline{\Omega}=N\setminus\Omega$  \\
$ \textrm{supp} (u)$ &Support of $u\in\mathbb{R}^n$, i.e., $\textrm{supp} (u)=\{i\in N:u_i\neq 0\}$\\
$u_\Omega$& $n$-dimensional vector obtained from $u\in \mathbb{R}^n$ with entries
  $(u_\Omega)_i=u_i$ \\
  & for $ i\in \Omega$
  and $(u_\Omega)_i=0$ for $i\in \overline{\Omega}$\\
$ |u|$ & Absolute
value of the vector $u\in \mathbb{R}^n$, i.e., $|u|=(|u_1|,\ldots,|u_n|)^T$\\
$\mathcal{L}_k(u)$&Index set of the  $k$ largest entries in magnitude of $u\in\mathbb{R}^n$ \\
$\mathcal{H}_k(u)$& Hard thresholding  of $u\in\mathbb{R}^n$, i.e., $\mathcal{H}_k(u)=u_{\Omega}$ where $\Omega=\mathcal{L}_k(u)$ \\
 $\|\cdot\|_i$&$\ell_i$-norm of a vector, $1\leq i\leq+\infty$\\
 $u \circ  v$& Hadamard product of $u$ and $v$ in $\mathbb{R}^n$, i.e., $u \circ v=(u_1v_1,\ldots,u_nv_n)^T$\\
 {\bf e} &The vector of ones in $\mathbb{R}^n$, i.e., {\bf e}~$ =(1,\ldots,1)^T$\\
  \hline
 \end{tabular}
  \end{indented}
 \end{table}
\end{center}

 \subsection{Basic inequalities}\label{sec-inq}
 Let us  first recall the RIC and RIP of an $m\times n$ matrix $A$ with $m<n.$
\begin{Def} \emph{\cite{CT05}} \label{def-RIC}
Given a matrix $A\in \mathbb{R}^{m\times n}$ with $m< n$. The $k$th order RIC of $A$, denoted by
$\delta_k,$ is the smallest nonnegative number  $\delta$ which obeys
\begin{align}\label{def-RIC-1}
(1-\delta){\left\| x \right\|}^2_2\leq  {\left\| Ax \right\|}^2_2\leq (1+\delta){\left\| x \right\|}^2_2
\end{align}
for all $k$-sparse vectors $x\in \mathbb{R}^n$. Moreover, the matrix $A$ is
said to satisfy the RIP of order $k$ if $\delta_k<1$.
\end{Def}

The following  lemma is very useful for the analysis of DTAM.
\begin{Lem} \emph{\cite{F11,Z20}} \label{lem-basic-ineq}
Let $u \in \mathbb{R}^n$ and $ v \in \mathbb{R}^m$ be two vectors, $s \in N$ be a positive  integer and $W\subseteq N$ be an index set.
\begin{itemize}
\item [\rm (i)]   If  $ |W\cup  \textrm{supp} (u)|\leq s$, then
$\| \left ( (I-A^ T A)u\right)_W\|_2\leq\delta_s {\| u \|}_2.$
\item [\rm (ii)]   If  $ |W|\leq s$, then
$\| \left ( A^ Tv\right)_W\|_2\leq\sqrt{1+\delta_s} {\| v \|}_2.$
\end{itemize}
\end{Lem}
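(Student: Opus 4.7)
The plan is to derive both inequalities from a standard polarization consequence of the RIP, namely that for any two vectors $u, z \in \mathbb{R}^n$ satisfying $|\textrm{supp}(u) \cup \textrm{supp}(z)| \leq s$ one has
\[
|\langle u,(I-A^T A)z\rangle| \leq \delta_s \|u\|_2 \|z\|_2.
\]
This follows by applying \eqref{def-RIC-1} to the $s$-sparse vectors $u+z$ and $u-z$ and subtracting, a polarization-style argument. I would record this inequality first as the main technical tool, since the rest of the proof is essentially algebraic manipulation.

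For part (i), I would introduce the auxiliary vector $z := ((I - A^T A)u)_W$ and observe that $\textrm{supp}(z) \subseteq W$, so $|\textrm{supp}(z) \cup \textrm{supp}(u)| \leq |W \cup \textrm{supp}(u)| \leq s$. Because $z$ vanishes outside $W$, I can drop the projection in the inner product and write
\[
\|z\|_2^2 = \langle z, ((I - A^T A)u)_W \rangle = \langle z, (I - A^T A)u \rangle \leq \delta_s \|z\|_2 \|u\|_2,
\]
where the final inequality invokes the polarization bound above. Dividing by $\|z\|_2$ (and noting the claim is trivial when $\|z\|_2=0$) yields the desired estimate.

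For part (ii), I would set $z := (A^T v)_W$, which is $s$-sparse since $\textrm{supp}(z) \subseteq W$ and $|W| \leq s$. Using that $z$ is supported on $W$, together with the Cauchy--Schwarz inequality and the upper RIP bound $\|Az\|_2 \leq \sqrt{1 + \delta_s}\,\|z\|_2$, I compute
\[
\|z\|_2^2 = \langle z, A^T v \rangle = \langle Az, v \rangle \leq \|Az\|_2 \|v\|_2 \leq \sqrt{1 + \delta_s}\,\|z\|_2\,\|v\|_2,
\]
and dividing by $\|z\|_2$ gives the conclusion.

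The only mildly delicate step is the polarization consequence of RIP used in (i); once that is in hand, the two parts fall out with just a few lines of inner-product algebra. Part (ii) is the easier of the two, as it uses only the upper side of \eqref{def-RIC-1} together with Cauchy--Schwarz and requires no polarization trick at all.
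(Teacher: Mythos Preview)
Your argument is correct and is the standard route to these estimates. Note, however, that the paper does not supply its own proof of this lemma: it is stated with citations to \cite{F11,Z20} and used as a known tool, so there is no in-paper proof to compare against. Your polarization-plus-duality derivation is precisely the argument one finds in those references (and in the Foucart--Rauhut monograph), so in that sense you have reproduced the intended proof.
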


The next fundamental property of orthogonal projection can be found in \cite[Eq.(3.21)]{F11} and \cite[pp.49]{Z20}.
\begin{Lem}\emph{\cite{F11,Z20}} \label{lem-pursuit}
Let $x \in \mathbb{R}^n$ be a vector satisfying $y = Ax+\nu$ where $ \nu\in R^m$ is a noise vector.   Let $ \Omega \subseteq N$ be an index set satisfying $|\Omega|\leq k$ and
\begin{equation*}
u^*=\arg \min_{u\in \mathbb{R}^n} \{ {\| y-Au\|}_2^2:  \textrm{supp} (u)\subseteq \Omega \}.
\end{equation*}
Then
\begin{equation*}
\|u^*-x_S\|_2
\leq \frac{1}{\sqrt{1-(\delta_{2k})^2}}\| (x_S)_{\overline{\Omega}}\|_2+\frac{ \sqrt{1+\delta_{k}}}{1-\delta_{2k}}\|\nu' \|_2,
\end{equation*}
where $S:=\mathcal{L}_k(x)$ and $\nu':=\nu+Ax_{\overline{S}}. $
\end{Lem}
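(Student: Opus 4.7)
The plan is to exploit the first-order optimality condition for the least-squares subproblem that defines $u^*$, convert it into a fixed-point-style identity for the restriction of the error vector $h := u^* - x_S$ to the active set $\Omega$, and then use the two RIP-type inequalities in Lemma~\ref{lem-basic-ineq} to control the two pieces of $h$ separately.

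First I would observe that, because $\textrm{supp}(u^*)\subseteq\Omega$, the optimality condition gives $(A^T(y-Au^*))_\Omega=0$. Writing $y=Ax_S+\nu'$ with $\nu'=\nu+Ax_{\overline S}$, this becomes $(A^TA(x_S-u^*))_\Omega=(A^T\nu')_\Omega$, i.e.
\begin{equation*}
(A^TAh)_\Omega = -(A^T\nu')_\Omega.
\end{equation*}
Adding and subtracting $h_\Omega$ and using $h=h_\Omega+h_{\overline\Omega}$ with $h_{\overline\Omega}=-(x_S)_{\overline\Omega}$ (since $u^*_{\overline\Omega}=0$), I would rearrange this into
\begin{equation*}
h_\Omega = \bigl((I-A^TA)h\bigr)_\Omega - (A^T\nu')_\Omega.
\end{equation*}

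Next I would bound the two terms on the right. Since $\Omega\cup\textrm{supp}(h)\subseteq \Omega\cup S$ has cardinality at most $2k$, Lemma~\ref{lem-basic-ineq}(i) gives $\|((I-A^TA)h)_\Omega\|_2\leq \delta_{2k}\|h\|_2$. Since $|\Omega|\leq k$, Lemma~\ref{lem-basic-ineq}(ii) gives $\|(A^T\nu')_\Omega\|_2\leq\sqrt{1+\delta_k}\,\|\nu'\|_2$. Combining and applying the triangle inequality yields
\begin{equation*}
\|h_\Omega\|_2 \leq \delta_{2k}\|h\|_2 + \sqrt{1+\delta_k}\,\|\nu'\|_2.
\end{equation*}

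The last step, and the one that carries the bookkeeping burden, is to convert the above into the stated bound on $\|h\|_2=\|u^*-x_S\|_2$. The key identity is the orthogonal decomposition $\|h\|_2^2 = \|h_\Omega\|_2^2 + \|(x_S)_{\overline\Omega}\|_2^2$. Substituting the inequality for $\|h_\Omega\|_2$, squaring out, and collecting terms produces the quadratic inequality
\begin{equation*}
(1-\delta_{2k}^2)\|h\|_2^2 - 2\delta_{2k}\sqrt{1+\delta_k}\,\|\nu'\|_2\,\|h\|_2 - \bigl[(1+\delta_k)\|\nu'\|_2^2+\|(x_S)_{\overline\Omega}\|_2^2\bigr] \leq 0.
\end{equation*}
Viewing this as a quadratic in $\|h\|_2$, taking the positive root, and then loosening the square root via $\sqrt{a+b}\leq\sqrt{a}+\sqrt{b}$ cleanly factors the bound into the $\|(x_S)_{\overline\Omega}\|_2$ term with coefficient $(1-\delta_{2k}^2)^{-1/2}$ and the $\|\nu'\|_2$ term with coefficient $\sqrt{1+\delta_k}/(1-\delta_{2k})$. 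The main obstacle I anticipate is the last algebraic step: ensuring that the quadratic manipulation yields exactly these constants rather than merely a bound of the right order. Everything before that—the optimality identity and the two RIP estimates—is essentially mechanical given Lemma~\ref{lem-basic-ineq}.
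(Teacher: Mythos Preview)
Your argument is correct and is essentially the standard proof of this orthogonal-projection estimate; the paper itself does not prove Lemma~\ref{lem-pursuit} but merely cites it from \cite{F11,Z20}, and the derivation there proceeds exactly as you outline. (A harmless sign slip: the optimality condition actually gives $(A^TAh)_\Omega = +(A^T\nu')_\Omega$, hence $h_\Omega = ((I-A^TA)h)_\Omega + (A^T\nu')_\Omega$, but since you immediately take norms this does not affect anything downstream, and your quadratic-in-$\|h\|_2$ computation does yield precisely the stated constants.)
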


We  also need the following result taken from \cite[Lemma 4.1]{ZL21} concerning hard thresholding.
\begin{Lem} \emph{\cite{ZL21}}\label{lem-hard-thres}
Let $u,h\in\mathbb{R}^n$ be two vectors and  $\| h\|_0\leq k$. Then
\begin{equation*}
\|h-\mathcal{H}_k(u)\|_2\leq \|(u-h)_{S\cup S^*}\|_2+\|(u-h)_{S^*\setminus S}\|_2,
\end{equation*}
where $S:= \textrm{supp} (h)$ and $S^*:= \textrm{supp} (\mathcal{H}_k(u))$.
\end{Lem}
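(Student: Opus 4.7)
The plan is to prove the bound by a direct triangle-inequality decomposition, using the defining property of hard thresholding to control one of the pieces. The key preliminary observation is that both $h$ and $\mathcal{H}_k(u)=u_{S^*}$ are supported on $S\cup S^*$, so the whole argument reduces to comparing vectors on this index set.

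First I would write the identity
\begin{equation*}
h-\mathcal{H}_k(u)=h-u_{S^*}=(h-u_{S\cup S^*})+(u_{S\cup S^*}-u_{S^*})=-(u-h)_{S\cup S^*}+u_{S\setminus S^*},
\end{equation*}
which follows by splitting $u_{S\cup S^*}$ according to whether an index lies in $S^*$ or in $S\setminus S^*$. Applying the triangle inequality to this decomposition immediately gives
\begin{equation*}
\|h-\mathcal{H}_k(u)\|_2\leq \|(u-h)_{S\cup S^*}\|_2+\|u_{S\setminus S^*}\|_2,
\end{equation*}
so the claim reduces to showing $\|u_{S\setminus S^*}\|_2\leq \|(u-h)_{S^*\setminus S}\|_2$.

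The second step handles this remaining piece. Since $h$ vanishes outside $S$, on $S^*\setminus S$ we have $(u-h)_{S^*\setminus S}=u_{S^*\setminus S}$, so it suffices to prove $\|u_{S\setminus S^*}\|_2\leq \|u_{S^*\setminus S}\|_2$. This is where the specific structure of hard thresholding enters: because $S^*=\mathcal{L}_k(u)$ selects the $k$ indices of largest magnitude, every $j\in S^*\setminus S$ satisfies $|u_j|\geq |u_i|$ for every $i\in S\setminus S^*$ (otherwise $S^*$ would not consist of the top-$k$ magnitudes). Combined with the cardinality relation $|S\setminus S^*|\leq k-|S\cap S^*|=|S^*|-|S\cap S^*|=|S^*\setminus S|$ (using $|S|\leq k=|S^*|$), this yields the desired coordinate comparison of sums of squares.

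I expect the main subtlety to be the magnitude comparison step just described: one must invoke simultaneously the cardinality inequality $|S\setminus S^*|\leq |S^*\setminus S|$ and the extremality property of $\mathcal{L}_k(u)$ to conclude $\sum_{i\in S\setminus S^*}u_i^2\leq \sum_{j\in S^*\setminus S}u_j^2$. Everything else is bookkeeping: the triangle inequality, the support identities for $h$ and $\mathcal{H}_k(u)$, and the substitution $(u-h)_{S^*\setminus S}=u_{S^*\setminus S}$ finish the argument.
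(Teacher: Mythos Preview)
The paper does not actually prove this lemma; it is cited from \cite{ZL21} without proof, so there is no in-paper argument to compare against. Your proof is correct and is essentially the standard one: decompose $h-\mathcal{H}_k(u)=-(u-h)_{S\cup S^*}+u_{S\setminus S^*}$, apply the triangle inequality, and then use the extremality of $\mathcal{L}_k(u)$ together with $(u-h)_{S^*\setminus S}=u_{S^*\setminus S}$ to bound the second piece.

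One small point deserves care. You assert $|S^*|=k$ in the cardinality step, but $S^*:=\textrm{supp}(\mathcal{H}_k(u))$ can have fewer than $k$ elements when some of the top-$k$ magnitudes of $u$ are zero. This edge case is harmless---if $|S^*|<k$ then $u$ vanishes outside $S^*$, so $u_{S\setminus S^*}=0$ and the needed inequality is trivial---but it should be said explicitly. A cleaner way to handle the comparison, which sidesteps both the pointwise magnitude argument and the cardinality count, is to use the best-$k$-term property directly: since $|S|\leq k$, one has $\|u_S\|_2^2\leq \|u_{\mathcal{L}_k(u)}\|_2^2$, and subtracting the common contribution on $S\cap\mathcal{L}_k(u)$ yields $\|u_{S\setminus S^*}\|_2\leq \|u_{S^*\setminus S}\|_2$ immediately (the zero entries on $\mathcal{L}_k(u)\setminus S^*$ contribute nothing on either side).
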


\subsection{Generalized mean function}\label{GMF}

A generalized mean function defined in \cite{ZFL06} will be used in the algorithm proposed in next section. Let us state a result for generalized mean functions, which can be obtained directly
from \cite[Theorem 2.7]{ZFL06}.

\begin{Lem}\label{lem-GM-fun}
Let $\Lambda$ be an open interval in $\mathbb{R}$ and $[0,1]\subset \Lambda$, and let  $\theta\in \mathbb{R}_{++}^k$ be a given positive vector. Let $\Psi, \phi_i :\Lambda\rightarrow \mathbb{R},$   $i=1,\ldots,k$ be strictly increasing and twice continuously differentiable, and let $\Psi$ be convex and $\phi_i, i=1,\ldots,k $ be strictly convex. Assume that there exist constants $\tau$ and $\tau_i>0, i=1,\ldots,k$ such that  for $ t\in \Lambda $
\begin{align*}
\tau_i \phi_i(t)\phi_i^{''}(t)\geq [\phi_i^{'}(t)]^2 , ~~\tau \Psi(t)\Psi^{''}(t)\leq [\Psi^{'}(t)]^2.
\end{align*}
If $\tau\geq \max_{1\leq i\leq k}\tau_i$, then the generalized mean function
\begin{align}\label{G-M-fun}
\Gamma_\theta(z)=\Psi^{-1}\left(\sum_{i=1}^k \theta_i \phi_i(z_i)\right),
\end{align}
 where $ z =(z_1, ..., z_k)^T \in \Lambda^k,$ is convex, strictly increasing and twice continuously differentiable.
\end{Lem}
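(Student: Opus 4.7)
My plan is to treat the three claims separately, disposing of smoothness and monotonicity first and then attacking convexity, which is the only nontrivial part. Smoothness follows from the chain rule together with the $C^2$ inverse function theorem applied to $\Psi$: the composition $\Psi^{-1}\circ F$, with $F(z)=\sum_{i=1}^{k}\theta_i\phi_i(z_i)$, is $C^2$ on $\Lambda^k$ as soon as the range of $F$ is contained in $\Psi(\Lambda)$. Strict monotonicity in each coordinate then drops out of the chain-rule formula $\partial_i\Gamma_\theta(z)=(\Psi^{-1})'(F(z))\,\theta_i\phi_i'(z_i)$, whose three factors are all strictly positive.

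For convexity I would show that the Hessian of $\Gamma_\theta$ is positive semidefinite at every $z\in\Lambda^k$. Writing $g=\Psi^{-1}$, so that $g'>0$ and $g''=-(\Psi''\circ g)/(\Psi'\circ g)^3\leq 0$ by convexity of $\Psi$, a direct chain-rule computation gives
\[
\nabla^2\Gamma_\theta(z)=g''(F(z))\,\nabla F(z)\nabla F(z)^T+g'(F(z))\,\mathrm{diag}(\theta_i\phi_i''(z_i)).
\]
The first term is negative semidefinite and the second is positive definite, so the question is whether the positive part dominates. Testing against an arbitrary $v\in\mathbb{R}^k$, nonnegativity of $v^T\nabla^2\Gamma_\theta(z)v$ reduces to the scalar inequality
\[
g'(F(z))\sum_{i=1}^{k}\theta_i\phi_i''(z_i)v_i^2\ \geq\ |g''(F(z))|\Bigl(\sum_{i=1}^{k}\theta_i\phi_i'(z_i)v_i\Bigr)^2.
\]

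The decisive step is a Cauchy--Schwarz estimate that splits $\theta_i\phi_i'(z_i)v_i$ as $\bigl(\sqrt{\theta_i}\,\phi_i'(z_i)/\sqrt{\phi_i''(z_i)}\bigr)\bigl(\sqrt{\theta_i\phi_i''(z_i)}\,v_i\bigr)$ and then invokes the first hypothesis in the form $[\phi_i'(z_i)]^2/\phi_i''(z_i)\leq\tau_i\phi_i(z_i)\leq\tau\phi_i(z_i)$, using $\tau\geq\max_j\tau_j$. This bounds $\sum_i\theta_i[\phi_i'(z_i)]^2/\phi_i''(z_i)$ by $\tau F(z)$, and after cancelling the common factor $\sum_i\theta_i\phi_i''(z_i)v_i^2$, the required Hessian inequality collapses to $[\Psi'(s)]^2\geq\tau\Psi(s)\Psi''(s)$ at $s=g(F(z))$, which is exactly the second hypothesis. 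The main obstacle, in my view, is the positivity/sign bookkeeping that makes this chain of manipulations legitimate: one needs $\phi_i''(z_i)>0$, $\phi_i(z_i)\geq 0$ and $\Psi(s)\geq 0$ on the relevant range so that inequalities can be multiplied by the positive weights $\theta_i$ and summed without reversal. Strict monotonicity of $\phi_i$ only yields $\phi_i'\geq 0$, so $\phi_i\geq 0$ must be extracted from the hypothesis $\tau_i\phi_i\phi_i''\geq[\phi_i']^2\geq 0$ itself (given $\tau_i,\phi_i''>0$); the assumption $[0,1]\subset\Lambda$ is what anchors the analogous sign of $\Psi$. Once these technicalities are handled, no deeper machinery is required.
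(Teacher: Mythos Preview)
The paper does not actually prove this lemma: it is stated as a direct consequence of \cite[Theorem~2.7]{ZFL06} and no argument is given in the text. Your direct Hessian computation---write $\Gamma_\theta=g\circ F$ with $g=\Psi^{-1}$, expand $\nabla^2\Gamma_\theta=g''(F)\nabla F\nabla F^T+g'(F)\,\mathrm{diag}(\theta_i\phi_i'')$, and dominate the rank-one negative part via Cauchy--Schwarz combined with the two regularity inequalities---is the natural self-contained route and is essentially the argument behind the cited result. The reduction you describe is correct: Cauchy--Schwarz plus $[\phi_i']^2\le\tau_i\phi_i\phi_i''\le\tau\phi_i\phi_i''$ bounds $(\sum_i\theta_i\phi_i'v_i)^2$ by $\tau F(z)\sum_i\theta_i\phi_i''v_i^2$, and after substituting $g'=1/\Psi'$ and $|g''|=\Psi''/(\Psi')^3$ the remaining inequality is exactly $[\Psi'(s)]^2\ge\tau\Psi(s)\Psi''(s)$.

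One small correction to your bookkeeping: the hypothesis $[0,1]\subset\Lambda$ is not what secures the sign of $\Psi$ in the final step. That assumption is present only so that $\Gamma_\theta$ is defined on $[0,1]^k$, where the algorithm evaluates it. The nonnegativity you need at the end is $F(z)\ge 0$, and this comes entirely from $\phi_i\ge 0$ (which, as you note, is forced by $\tau_i\phi_i\phi_i''\ge[\phi_i']^2\ge 0$ together with $\tau_i>0$ and $\phi_i''>0$); since $F(z)=\Psi(s)$, the inequality $\Psi(s)\ge 0$ is a consequence, not an independent input. Also note that if $\phi_i''(z_i)=0$ at some point the hypothesis forces $\phi_i'(z_i)=0$ too, so that index contributes zero on both sides of the Cauchy--Schwarz split and can be dropped; your division by $\sqrt{\phi_i''}$ is therefore harmless once this degenerate case is set aside.
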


If $\Psi=\phi_1=\cdots=\phi_k$, \eqref{G-M-fun} reduces to the mean function in \cite{HLP34}. Some specific examples of generalized mean functions satisfying Lemma \ref{lem-GM-fun} are given as follows.

\begin{Exam}\label{example}
\emph{(i)} Taking $\Lambda=\mathbb{R},$ constant $\sigma>0$ and $\Psi(t)=\phi_i(t)=e^{t/\sigma} $ $ (i=1,\ldots,k), $   we get
\begin{align}\label{Exam1-G-M}
\Gamma_\theta(z)=\sigma\ln \left(\sum_{i=1}^k \theta_i e^{ z_i /\sigma}\right),~ ~  z\in \Lambda^k.
\end{align}

\emph{(ii)} Taking $\sigma>0,$  $\Lambda=(-\sigma,+\infty)$ and   $l>1$  as well as $\Psi(t)=\phi_i(t)=(t+\sigma)^l $ $ (i=1,\ldots,k), $ one has
\begin{align*}
\Gamma_\theta(z)= \left(\sum_{i=1}^k \theta_i ( z_i+\sigma)^l\right)^{1/l}-\sigma,~ ~  z\in \Lambda^k.
\end{align*}

\emph{(iii)} Taking $\sigma>0, $ $\Lambda=(-\sigma,+\infty),$     and $\Psi(t)=(t+\sigma)^l$ with $1<l\leq 2$ and $\phi_i(t)=\Delta_{1,1}(t+\sigma) \textrm{ or } \Delta_{1,2}(t+\sigma)$ ($ i=1,\ldots,k$),
where $\Delta_{1,1}$ and $ \Delta_{1,2} $ are functions defined as
$$
\Delta_{1,1}(\hat{t})={\hat{t}}^2/2-{\hat{t}}+\ln({\hat{t}}+1),~~\Delta_{1,2}({\hat{t}})=\frac{1}{2}\left[({\hat{t}}+1)^2-({\hat{t}}+1)^{-1}-3{\hat{t}}\right]
$$
with $\hat{t}\in(0,+\infty),$ one has
\begin{align*}
\Gamma_\theta(z)= \left(\sum_{i=1}^k \theta_i \phi_i( z_i)\right)^{1/l}-\sigma,~ ~  z\in \Lambda^k.
\end{align*}
\end{Exam}

\subsection{Algorithms}\label{sec-alg}
Before we state our algorithm, let us first recall the PGROTP algorithm in \cite{MZKS22}, which uses the partial gradient  to speed up the ROTP.
\begin{algorithm}
\caption{~Partial gradient relaxed
optimal $k$-thresholding pursuit  (PGROTP) }
Input the data $(A,y)$, integer number $\bar{q}\geq k$ and initial point $x^0$.
\begin{itemize}
\item[ S1.] At the current point $x^p$, set
 $
u^p=x^p+\mathcal{H}_{\bar{q}}(A^ T(y-Ax^p)).
$
\end{itemize}

\begin{itemize}
\item[ S2.]Generate the index set $S^{p+1}=\mathcal{L}_k(u^p \circ {w^p})$ by solving the optimization problem
\begin{align}\label{algorithm-ROTP-2}
w^p=\arg \min_{w\in \mathbb{R}^n} \{ {\| y-A(u^p \circ {w})\|}_2^2:  ~ \sum_{i=1}^n w_i=k, ~ 0\leq w \leq  {\bf e}\}.
\end{align}
\end{itemize}

\begin{itemize}
\item[ S3.]
Compute the next iterate $x^{p+1}$ by solving the orthogonal projection problem
 \begin{align*}
x^{p+1}=\arg \min_{x\in \mathbb{R}^n}  \{ {\| y-Ax\|}_2^2: \textrm{supp}(x)\subseteq S^{p+1}\}.
\end{align*}
\end{itemize}
Repeat S1-S3 until a certain stopping criterion  is met.
\end{algorithm}
 However, PGROTP still has computational complexity similar to that of ROTP$\omega$ \cite{MZKS22,ZL21}. It is worth mentioning that  the convergence of PGROTP was shown only for the case  $\bar{q}\geq 2k$ at present \cite{MZKS22}.

 The dynamic index selection rules in StOMP \cite{DTDS12} and EDOMP \cite{ZL23} aim to efficiently use the information provided by the gradient-based search direction to predict the problem solution. The iterative  method with memory aims to use more than one or all generated iterates to obtain a search direction. Moreover, as shown in PGROTP, using part of the search direction may help lower the dimension of quadratic optimization subproblem in ROTP-type method. Thus by merging these techniques, we propose the so-called dynamic thresholding algorithm with memory (DTAM) for linear inverse problems, in which a new dynamic index selection strategy based on the following generalized mean function is adopted:
\begin{equation} \label{meanfff}  f(z):=\Gamma_{\theta}(z)-\Gamma_{\theta}(0), \end{equation}
where $\Gamma_{\theta}(z)$  is given by \eqref{G-M-fun}.

\begin{algorithm}\label{alg-DSROTP}
\caption{~Dynamic Thresholding Algorithm with Memory (DTAM)}
Input data $(A,y,k)$ and the parameters $\gamma\in(0,1]$ and $\beta \in [0,1).$ Input a generalized mean function of the form (\ref{meanfff}) with given parameter $\theta\in \mathbb{R}_{++}^k.$ Set the initial point $x^0=0$.
\begin{itemize}
\item[ S1.] Let  $r^p=\sum_{j=0}^p\beta^{p-j}\hat{r}^j$, where $\hat{r}^j=A^ T(y-Ax^j)$ for $j=0,\ldots,p$. Let  $\Omega_i =\mathcal{L}_i(r^p),~ i=1,\ldots, k.$ Set
 \begin{equation}\label{SDir} u^p=x^p+ (r^p)_{\Omega_q}, \end{equation}  where $q$ is determined as
 \begin{align}\label{alg-DSROTP-1}
q=\min\left\{i: ~ f\left(\frac{|r^p_{(i,k)}|}{\|r^p_{(k,k)}\|_2}\right)\geq\gamma f\left(\frac{|r^p_{(k,k)}|}{\|r^p_{(k,k)}\|_2}\right),~i=1,\ldots,k\right\},
\end{align}
in which  $r^p_{(i,k)},$ $i=1, ...,k$ are $k$-dimensional vectors whose entries are those of $(r^p)_{\Omega_i}$ supported on $\Omega_k$, i.e.,
\begin{align*}
 r^p_{(i,k)}=\{\left((r^p)_{\Omega_i}\right)_j: j\in  \Omega_k\}.
\end{align*}
\end{itemize}

\begin{itemize}
\item[S2.]   Let $V^p=\textrm{supp}(x^p)\cup \Omega_q $. If $|V^p|\leq k$, set $S^{p+1}=V^p$. Otherwise, if $|V^p|>k$, set $S^{p+1}=\mathcal{L}_k(u^p \circ {w^p})$, where $w^p$ is the solution to the problem
\begin{align}\label{alg-DSROTP-2}
 \min_{w\in \mathbb{R}^n} \left\{ {\| y-A(u^p \circ {w})\|}_2^2: ~ w_j=0\textrm{ for }j\notin V^p, \sum_{i\in V^p}w_i=k, ~0\leq w \leq  {\bf e}\right\}.
\end{align}
\end{itemize}

\begin{itemize}
\item[ S3.] Let
 \begin{align}\label{alg-ROTP-3}
x^{p+1}=\arg \min_{x\in \mathbb{R}^n} \{ {\| y-Ax\|}_2^2: ~ \textrm{supp}(x)\subseteq S^{p+1}\}.
\end{align}
\end{itemize}
Repeat S1-S3 until a certain stopping criterion is met.
\end{algorithm}

At the first step S1 of DTAM,  the vector $r^p = \hat{r}^p + \beta \hat{r}^{p-1} + \cdots + \beta^p \hat{r}^{0}  $ is the combination of negative gradients of $\|y-Ax\|_2^2/2 $ at the generated iterates. As the coefficients  $\beta^{\ell}, \ell=0, , ..., p$ are decaying as $\ell$ increases,  a more recent iterate is allocated a weight larger than their predecessors. For this reason, $\beta$ is referred to as a forgetting factor.   When $\beta=0,$ the vector $r^p$ reduces to the negative gradient at the current point $x^p$. The search direction $(r^p)_{\Omega_q}$ adopted in (\ref{SDir}) is a hard thresholding of $r^p$, i.e., $ (r^p)_{\Omega_q}= {\cal H}_q (r^p).$ The number $q$ is uniquely determined by the index selection rule (\ref{alg-DSROTP-1}) which is dynamically changed during iteration. Note that the inequality (\ref{alg-DSROTP-1}) is always satisfied for $i=k,$ and hence there exists a smallest  $i$ such that the inequality holds. Since $u^p=x^p+ (r^p)_{\Omega_q}$, we have $\textrm{supp}(u^p)\subseteq\textrm{supp}(x^p)\cup \Omega_q$. When $\textrm{supp}(u^p) \not= \textrm{supp}(x^p)\cup \Omega_q,$  performing the relaxed optmal $k$-thresholding of $ u^p$ with index set $ V^p=\textrm{supp}(x^p)\cup \Omega_q$ (i.e., solving the convex optimization problem (\ref{alg-DSROTP-2}))  might reduce the objective function in (\ref{alg-DSROTP-2}) more than the case $ V^p=\textrm{supp}(u^p) $.   It is well known that such reduction might help speed up the algorithm and enhance the convergence of the algorithm. There are several choices of the stopping criteria for DTAM. For instance, we may stop the algorithm after being performed a prescribed number of iterations, or we may stop the algorithm when $ \|y-Ax^{p}\|_2\leq \varepsilon,$ where $ \varepsilon$ is a given tolerance.

\begin{Rem} \label{Complexity} \emph{We can compare the computational complexity of DTAM and ROTP-type algorithms.
 The complexity of  ROTP$\omega$ and  PGROTP  in each iteration is $O(mn+m^3 +n^{3.5}L_n)$ \cite{MZKS22,ZL21}, where $L_n$ depending on $n$ is the size of the problem data encoding in binary \cite{MS90}. The main cost of the ROTP-type algorithms  is solving the quadratic optimization problem \eqref{algorithm-ROTP-2} which requires $O(n^{3.5}L_n)$ flops based on an interior-point algorithm \cite{MS90,T88,Y87}. It is evident that the actual dimension of \eqref{alg-DSROTP-2} is $ |V^p|$ which is at most $2k$, and thus solving \eqref{alg-DSROTP-2} only requires $O(k^{3.5}L_{2k})$ flops. Therefore, the complexity of DTAM with a simple generalized mean function is about $O(mn+m^3+k^{3.5}L_{2k})$ in each iteration, which is much lower than that of ROTP-type algorithms.}
\end{Rem}

\begin{Rem} \label{Mot-GMF} \emph{A big difference of DTAM from related existing methods lies in the index selection rule  in which the general mean function is used. The purpose of using the generalized mean function is to provide a relatviely more general framework of the algorihtm so that the  theoretical result can be established in  broader settings and more alternative index selection rules can be used for  implementation. To see how  the choice of generlized mean functions might influence the performance of the algorihtm, let us first establish the solution error bound in the next Section and then make a further discussion in Remark \ref{Rem-Zhao} on this issue. While the DTAM using general mean functions for index selection instead of the thesholding rule as in EDOMP\cite{ZL23} (to which  the error bound of EDOMP has not yet  established so far in the literature), we can establish the solution error bound (including the convergence) of DTAM under suitable conditions, as shown in Theorem \ref{main-thm} in the next section.}
\end{Rem}

\section{Error bound of DTAM} \label{convergence}
The purpose of this section is to establish the solution error bound of DTAM under the RIP of order $3k$. In other words, we show the convergence of DTAM via estimating the distance between the solution of linear inverse problem and the iterates generated by DTAM.
First, we need to establish several technical results. The first one displays the relation of $\|(r^p)_{\Omega_q}\|_2$ and $\|(r^p)_{\Omega_k}\|_2$, which is essential to bound the term $\|(u^p-x_S)_S\|_2  $ in order to eventually obtain the main result in this section.

\begin{Lem}\label{lem-rp_Omg_qk} Let $f(z) =  \Gamma_{\theta}(z)-\Gamma_{\theta}(0)$  where $\Gamma_{\theta}(z) $ is a generalized mean function satisfying Lemma \ref{lem-GM-fun}. Let $\gamma,  q, r^p, \Omega_q $ and $\Omega_k$  be given as in DTAM.   Then
\begin{align}\label{rp_Omg_qk}
\|(r^p)_{\Omega_q}\|_{2}\geq g(\gamma)\|(r^p)_{\Omega_k}\|_2,
\end{align}
where
\begin{align}\label{g-gam}
g(\gamma)=
\frac{2\gamma c }{\sqrt{\|\nabla f(0)\|_2^2+2\gamma c \lambda_*}+\|\nabla f(0)\|_2} <1
\end{align}
with $$c:=\min_{1\leq i\leq k} \frac{\partial f}{\partial z_i}(0)>0, ~ \lambda_*:=\max_{z\in [0,1]^k}\lambda_{max}(z)\geq 0,$$ where $\lambda_{max}(z)$ is  the largest eigenvalue of the Hessian matrix $\nabla^2f(z)$.
\end{Lem}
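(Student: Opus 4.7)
The plan is to normalize both sides and then sandwich the value of $f$ between linear--quadratic bounds. First I would observe that by construction $\Omega_q\subseteq \Omega_k$ (both are top-magnitude index sets of $r^p$, with $q\leq k$), which forces $\|(r^p)_{\Omega_q}\|_2 = \|r^p_{(q,k)}\|_2$ and $\|(r^p)_{\Omega_k}\|_2 = \|r^p_{(k,k)}\|_2$. Setting $a := |r^p_{(q,k)}|/\|r^p_{(k,k)}\|_2$ and $b := |r^p_{(k,k)}|/\|r^p_{(k,k)}\|_2$, the target inequality \eqref{rp_Omg_qk} rewrites as $\|a\|_2\geq g(\gamma)$, and the defining rule \eqref{alg-DSROTP-1} for $q$ rewrites as $f(a)\geq \gamma f(b)$. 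Since $\|b\|_2=1$ and $a,b\in [0,1]^k$, second-order Taylor expansion of $f$ along the segments $[0,a]$ and $[0,b]$ is legitimate thanks to the hypothesis $[0,1]\subset \Lambda$ of Lemma \ref{lem-GM-fun}.

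Next, I would bound $f$ on each side. For a lower bound on $f(b)$, I would use convexity of $f$ (inherited from Lemma \ref{lem-GM-fun}) together with $f(0)=0$ (from the definition \eqref{meanfff}) to obtain the supporting-hyperplane estimate $f(b)\geq \nabla f(0)^T b$; the strict monotonicity of $\Gamma_\theta$ gives $\partial f/\partial z_i(0)\geq c>0$ for every $i$, whence $f(b)\geq c\|b\|_1\geq c\|b\|_2 = c$. For an upper bound on $f(a)$, I would apply Taylor's theorem at $0$: Cauchy--Schwarz controls the linear term by $\|\nabla f(0)\|_2\|a\|_2$, and the definition of $\lambda_*$ as the supremum of $\lambda_{\max}(\nabla^2 f)$ over $[0,1]^k$ bounds the quadratic remainder by $\tfrac{1}{2}\lambda_*\|a\|_2^2$. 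Chaining these with $f(a)\geq \gamma f(b)$ yields the scalar inequality
\[
\tfrac{1}{2}\lambda_*\|a\|_2^2 + \|\nabla f(0)\|_2\|a\|_2 - \gamma c \;\geq\; 0.
\]

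Finally, solving this quadratic inequality for $t:=\|a\|_2$ and rationalizing the resulting positive root reproduces exactly the expression for $g(\gamma)$ displayed in \eqref{g-gam}, giving $\|a\|_2\geq g(\gamma)$; this is the sole computational step and is routine. The strict bound $g(\gamma)<1$ then reduces after squaring to the elementary check $\gamma c < \|\nabla f(0)\|_2 + \lambda_*/2$, which follows from $c\leq \|\nabla f(0)\|_2$ and $\gamma\leq 1$ (with $\lambda_*\geq 0$). I do not foresee any substantial obstacle: the proof is essentially convex-analytic bookkeeping once the normalization is made, and the only care points are verifying the positivity of $c$ (from strict monotonicity of $\Gamma_\theta$) and the validity of Taylor expansion on $[0,1]^k$, both of which are built into the hypotheses of Lemma \ref{lem-GM-fun}.
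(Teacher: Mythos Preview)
Your proposal is correct and follows essentially the same approach as the paper: normalize by $\|r^p_{(k,k)}\|_2$, bound $f$ from above at the $q$-vector by second-order Taylor with the Hessian eigenvalue cap $\lambda_*$, bound $f$ from below at the $k$-vector by the convexity (supporting hyperplane) inequality and $\|\cdot\|_1\geq\|\cdot\|_2$, then solve the resulting scalar quadratic inequality for $\|a\|_2$. The only cosmetic difference is that the paper splits the final step into the cases $\lambda_*=0$ and $\lambda_*>0$, whereas you handle them uniformly via the rationalized root formula; both are fine.
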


\begin{proof}
It follows from Lemma \ref{lem-GM-fun} that $f(z)=\Gamma_{\theta}(z)-\Gamma_{\theta}(0)$ is strictly increasing, twice continuously differentiable and convex in $\Lambda^k \supseteq [0,1]^k.$ Hence, the largest eigenvalue $\lambda_{max}(z)$ of the Hessian matrix $\nabla^2f(z)$ is continuous in $\Lambda^k$. It is easy to check that
\begin{align}\label{property-f}
f(0)=0, ~~\frac{\partial f}{\partial z_i}(0)>0\textrm{ for }1\leq i\leq k,~~\nabla^2f(z)\succeq 0\textrm{ for }z\in \Lambda^k.
\end{align}
Therefore, $$ c=\min_{1\leq i\leq k} \frac{\partial f}{\partial z_i}(0)>0, ~   \lambda_*=\max_{z\in [0,1]^k}\lambda_{max}(z)\geq 0. $$
Let $r^p_{(i,k)}$, $ i=1, ..., k$ be defined as in DTAM and denote by $s:=1/\|r^p_{(k,k)}\|_2. $
By the Taylor expansion, there exists $\xi\in[0,1]^k$ such that
\begin{align}\label{f-Omg-q}
f(s|r^p_{(q,k)}|)&=f(0)+s|r^p_{(q,k)}|^T\nabla f(0) +\frac{s^2}{2}|r^p_{(q,k)}|^T\nabla^2f(\xi)|r^p_{(q,k)}|\nonumber\\
&\leq s\|\nabla f(0)\|_2\|r^p_{(q,k)}\|_2+\frac{s^2}{2}\lambda_*\|r^p_{(q,k)}\|_2^2.
\end{align}
On the other hand, since
  $f(z)$ is convex, it follows from \eqref{property-f} that
\begin{align}\label{f-Omg-k}
f(s|r^p_{(k,k)}|)\geq f(0)+s|r^p_{(k,k)}|^T \nabla f(0) \geq sc  \|r^p_{(k,k)}\|_1\geq sc  \|r^p_{(k,k)}\|_2=c.
\end{align}
From \eqref{alg-DSROTP-1}, we have $f(s|r^p_{(q,k)}|)\geq \gamma f(s|r^p_{(k,k)}|)$. This together with  \eqref{f-Omg-q} and \eqref{f-Omg-k} implies that
$$
 \frac{s^2}{2}\lambda_*\|r^p_{(q,k)}\|_2^2 +s\|\nabla f(0)\|_2\|r^p_{(q,k)}\|_2-\gamma c\geq 0.
$$
By setting $\widetilde{t}= s\|r^p_{(q,k)}\|_2$ which is less than or equal to 1,  the above inequality is written as
\begin{align*}
 \frac{\lambda_*}{2}(\widetilde{t})^2+\|\nabla f(0)\|_2\widetilde{t}-\gamma c\geq 0.
\end{align*}

 {\bf Case 1.}  $\lambda_*=0.$ In this case, the above inequality reduces to $ \|\nabla f(0)\|_2\widetilde{t}-\gamma c\geq 0,$ i.e., $ \widetilde{t} \geq \frac{\gamma c}{\|\nabla f(0)\|_2}=  g(\gamma)$ for this case. Thus the inequality \eqref{rp_Omg_qk}  holds in this case.

 {\bf Case 2.}  $\lambda_*> 0.$  Since $\|\nabla f(0)\|_2\geq \min_{1\leq i\leq k} \frac{\partial f}{\partial z_i}(0)=c$,  we see that $0<\gamma\leq 1< (\lambda_*+2\|\nabla f(0)\|_2)/(2c)$ under which the quadratic equation  $\frac{\lambda_*}{2}t^2+\|\nabla f(0)\|_2t-\gamma c=0$  has a unique positive root $g(\gamma) $ in $(0,1)$ given as \eqref{g-gam}. This implies that  $\widetilde{t}\in [g(\gamma),1]$ which is exactly the inequality \eqref{rp_Omg_qk} by noting that $\|r^p_{(q,k)}\|_2=\|(r^p)_{\Omega_q}\|_{2}$ and $\|r^p_{(k,k)}\|_2=\|(r^p)_{\Omega_k}\|_{2}. $
\end{proof}

We now estimate  the upper bound of $\|(u^p-x_S)_S\|_2$   which is used to establish the   error bound for DTAM, as shown in Theorem \ref{main-thm}.
\begin{Lem}\label{bound_up-x} Let $x \in \mathbb{R}^n$ satisfy that $y = Ax+\nu$ where $ \nu$ is a noise vector. Denote by $S=\mathcal{L}_k(x)$ and $\nu':=y-Ax_S$. Then the  vectors $ u^p$ and  $x^j,~j=0,\ldots,p$   generated by DTAM satisfy that
\begin{align}\label{bound-up-xS}
\|(u^p-x_S)_S\|_2 \leq C_1 Q_p+\beta Q_{p-1}+\frac{C_2}{1-\beta}\|\nu'\|_2,
\end{align}
where
\begin{align}\label{Qp}
Q_i:=\sum_{j=0}^i\beta^{i-j}\|x^j-x_S\|_2 \emph{\textrm{ with }} i=p-1, p
\end{align}
and $ C_1$ and $ C_2$ are constants given as
\begin{align}\label{C12}
C_1=\sqrt{2}\delta_{3k}+\sqrt{1-[g(\gamma)]^2}(1+\delta_{3k}),~C_2=\sqrt{1+\delta_{2k}}\left(\sqrt{2}+\sqrt{1-[g(\gamma)]^2}\right),
\end{align}
where $g(\gamma)\in (0,1)$ is given by \eqref{g-gam}.
\end{Lem}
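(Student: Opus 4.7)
The plan is to exploit the identity $(r^p)_{\Omega_q} = r^p - (r^p)_{\overline{\Omega_q}}$ to split
\[
(u^p - x_S)_S = \bigl((x^p - x_S) + r^p\bigr)_S - (r^p)_{S \cap \overline{\Omega_q}},
\]
so that $\|(u^p - x_S)_S\|_2 \leq M + B$ by the triangle inequality, where I set $M := \|((x^p - x_S) + r^p)_S\|_2$ and $B := \|(r^p)_{S \cap \overline{\Omega_q}}\|_2$. The $\sqrt{2}$ factors appearing in $C_1$ and $C_2$ can then be read off as slack from the orthogonal split $\|(u^p - x_S)_S\|_2^2 = \|(\cdot)_{S \cap \Omega_q}\|_2^2 + \|(\cdot)_{S \setminus \Omega_q}\|_2^2$ together with the algebraic inequality $a^2 + (b+c)^2 \leq (\sqrt{2}\sqrt{a^2+b^2} + c)^2$, which upgrades the triangle bound to $\|(u^p - x_S)_S\|_2 \leq \sqrt{2}\,M + B$.

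To control $M$ I would substitute $y = Ax_S + \nu'$ into $\hat r^j = A^T(y - Ax^j) = A^T\nu' - A^TA(x^j - x_S)$ and rearrange
\[
(x^p - x_S) + r^p = \sum_{j=0}^p \beta^{p-j}(I - A^TA)(x^j - x_S) - \beta\sum_{j=0}^{p-1}\beta^{p-1-j}(x^j - x_S) + \tfrac{1-\beta^{p+1}}{1-\beta}\,A^T\nu'.
\]
Restricting to $S$ (so $|S \cup \mathrm{supp}(x^j - x_S)| \leq 2k$) and applying Lemma~\ref{lem-basic-ineq}(i) to each $(I - A^TA)$-term together with Lemma~\ref{lem-basic-ineq}(ii) to the noise term yields $M \leq \delta_{3k}Q_p + \beta Q_{p-1} + \tfrac{\sqrt{1+\delta_{2k}}}{1-\beta}\|\nu'\|_2$, where the $\beta Q_{p-1}$-piece comes directly from the isolated $\beta\sum \beta^{p-1-j}(x^j - x_S)$-term. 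For $B$, Lemma~\ref{lem-rp_Omg_qk} gives $\|(r^p)_{\Omega_k \setminus \Omega_q}\|_2^2 \leq (1 - g(\gamma)^2)\|(r^p)_{\Omega_k}\|_2^2$, and from this I would upgrade to $B \leq \sqrt{1 - g(\gamma)^2}\,\|(r^p)_{\Omega_k}\|_2$ using the top-$k$ ordering of $|r^p|$ and $|S| \leq k$. The norm $\|(r^p)_{\Omega_k}\|_2$ is then bounded by the same RIP expansion as above but with $\Omega_k$ in place of $S$, which forces $\delta_{3k}$ because $|\Omega_k \cup \mathrm{supp}(x^j - x_S)| \leq 3k$, giving $\|(r^p)_{\Omega_k}\|_2 \leq (1 + \delta_{3k})Q_p + \tfrac{\sqrt{1+\delta_{2k}}}{1-\beta}\|\nu'\|_2$.

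Finally, adding $\sqrt{2}\,M + B$ and regrouping via $Q_p = \|x^p - x_S\|_2 + \beta Q_{p-1}$ yields $\|(u^p - x_S)_S\|_2 \leq C_1 Q_p + \beta Q_{p-1} + \tfrac{C_2}{1-\beta}\|\nu'\|_2$ with exactly the claimed $C_1 = \sqrt{2}\delta_{3k} + \sqrt{1-g(\gamma)^2}(1+\delta_{3k})$ and $C_2 = \sqrt{1+\delta_{2k}}(\sqrt{2} + \sqrt{1-g(\gamma)^2})$. The main technical obstacle is the step $B \leq \sqrt{1-g(\gamma)^2}\,\|(r^p)_{\Omega_k}\|_2$: when $|S \cap \overline{\Omega_q}| \leq k - q$ it is immediate because the $|S \cap \overline{\Omega_q}|$ largest entries of $|r^p|$ outside $\Omega_q$ all sit inside $\Omega_k \setminus \Omega_q$; the complementary case $|S \cap \overline{\Omega_q}| > k - q$ requires a more delicate use of the ordering of $|r^p|$ together with the cardinality bound $|S| \leq k$, and this is precisely where Lemma~\ref{lem-rp_Omg_qk} does its real work (through the lower bound on $\|(r^p)_{\Omega_q}\|_2$ in terms of $\|(r^p)_{\Omega_k}\|_2$).
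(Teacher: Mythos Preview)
Your decomposition $\|(u^p-x_S)_S\|_2 \leq M + B$ with $M=\|(x^p-x_S+r^p)_S\|_2$ and $B=\|(r^p)_{S\setminus\Omega_q}\|_2$ is the same starting point as the paper. However, the step you flag as the ``main technical obstacle'' is in fact false, and this is where your argument breaks down.

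Concretely, the bound $B\leq\sqrt{1-g(\gamma)^2}\,\|(r^p)_{\Omega_k}\|_2$ does \emph{not} follow from Lemma~\ref{lem-rp_Omg_qk} together with the top-$k$ ordering. Take $k=3$, $q=1$, $g(\gamma)=0.9$, and $r^p$ with ordered magnitudes $(1,\,0.3,\,0.3,\,0.3,\,0.3,\,0.3,\ldots)$, so $\Omega_1=\{1\}$, $\Omega_3=\{1,2,3\}$. Then $\|(r^p)_{\Omega_1}\|_2/\|(r^p)_{\Omega_3}\|_2=1/\sqrt{1.18}\approx0.921>0.9$, so Lemma~\ref{lem-rp_Omg_qk} is satisfied. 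But if $S=\{4,5,6\}$ (disjoint from $\Omega_3$), then $B=\|(r^p)_S\|_2=\sqrt{0.27}\approx0.520$, whereas $\sqrt{1-g(\gamma)^2}\,\|(r^p)_{\Omega_3}\|_2=\sqrt{0.19}\sqrt{1.18}\approx0.473$. The inequality fails. The point is that $|S\setminus\Omega_q|$ can be as large as $k$ (not $k-q$), so you cannot dominate it by the $k-q$ entries of $\Omega_k\setminus\Omega_q$; your ``easy case'' $|S\setminus\Omega_q|\leq k-q$ is in fact the only case (equality), since $|S\setminus\Omega_q|=k-|S\cap\Omega_q|\geq k-q$ always.

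The paper avoids this by splitting $B\leq\|(r^p)_{S\setminus\Omega_k}\|_2+\|(r^p)_{\Omega_k\setminus\Omega_q}\|_2$, bounding the first piece by $\|(r^p)_{\Omega_k\setminus S}\|_2$, and then using the optimality condition from step~S3, namely $(\hat r^j)_{\mathrm{supp}(x^j)}=0$, to rewrite $(r^p)_{\Omega_k\setminus S}$ in terms of $(x^j-x_S+\hat r^j)$. This is the ingredient missing from your outline. The $\sqrt{2}$ then arises from combining the $S$-restriction and the $(\Omega_k\setminus S)$-restriction into $\Omega_k\cup S$, not from the algebraic inequality you use. As a side effect, your route of multiplying all of $M$ by $\sqrt{2}$ would also produce $\sqrt{2}\,\beta Q_{p-1}$ rather than the $\beta Q_{p-1}$ in the statement, since the isolated $\beta\sum_{j}\beta^{p-1-j}(x^j-x_S)$ term sits inside $M$; in the paper's arrangement that term is separated out before the $\sqrt{2}$ is applied.
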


\begin{proof}
From the definition of $u^p$  in \eqref{SDir}, we have
\begin{align}\label{up-x-bd1}
\|(u^p-x_S)_S\|_2&=\|(x^p-x_S+ r^p)_{S}-( r^p)_{S\setminus\Omega_q}\|_2\nonumber\\
&\leq\|(x^p-x_S+ r^p)_{S}\|_2+\|(r^p)_{S\setminus {\Omega_q}}\|_2,
\end{align}
where the equality follows from the fact that $S\cap \Omega_q=S\setminus(S\setminus {\Omega_q})$.
Since $r^p=\sum_{j=0}^p\beta^{p-j}\hat{r}^j$ and  $S\setminus {\Omega_q}=(S\setminus {\Omega_k})\cup[(\Omega_k\setminus \Omega_q)\cap S]$,  the terms on the right hand of \eqref{up-x-bd1} can be bounded  as
\begin{align}\label{up-x-bd01}
\|(x^p-x_S+ r^p)_{S}\|_2
&=\left\|\sum_{j=0}^p\beta^{p-j}(x^j-x_S+\hat{r}^j)_{S}-\sum_{j=0}^{p-1}\beta^{p-j}(x^j-x_S)_{S}\right\|_2\nonumber\\
&\leq\sum_{j=0}^p\beta^{p-j}\|(x^j-x_S+\hat{r}^j)_{S}\|_2+\sum_{j=0}^{p-1}\beta^{p-j}\|x^j-x_S\|_2
\end{align}
and
\begin{align}\label{up-x-bd02}
\|(r^p)_{S\setminus {\Omega_q}}\|_2\leq\|(r^p)_{S\setminus {\Omega_k}}\|_2+\|(r^p)_{(\Omega_k\setminus \Omega_q)\cap S}\|_2\leq\|(r^p)_{S\setminus {\Omega_k}}\|_2+\|(r^p)_{\Omega_k\setminus \Omega_q}\|_2.
\end{align}
Since $\Omega_k=\mathcal{L}_k(r^p)$ and $|S|=k$, we get
$\|(r^p)_S\|_2^2\leq\|(r^p)_{\Omega_k}\|_2^2$. Eliminating the contribution of $S\cap\Omega_k$, we have
\begin{align}\label{up-x-bd03}
\|(r^p)_{S\setminus\Omega_k}\|_2\leq\|(r^p)_{\Omega_k\setminus S}\|_2.
\end{align}
From S3 in DTAM, we see that $x^j$ is the  solution of the quadratic optimization problem
 $$\min_{x\in \mathbb{R}^n} \{ {\| y-Ax\|}_2^2: ~ \textrm{supp}(x)\subseteq S^{j}\}$$ for $j=1,\ldots,p$. Thus the first-order optimality condition implies that $ (\hat{r}^j)_{S^{j}}=0,~j=1,\ldots,p,$
where $\hat{r}^j$ represents the negative gradient of  $\|y-Ax\|_2^2/2 $ at $x^j$. Since  $\textrm{supp}(x^j)\subseteq S^{j}$ for $j=1,\ldots,p$ and $x^0=0,$ we claim that  $\textrm{supp}(x^j)\cap \textrm{supp}(\hat{r}^j)=\emptyset$ for $j=0,\ldots,p$, i.e.,
\begin{align}\label{relation-r-x}
 (\hat{r}^j)_{\textrm{supp}(x^j)}=0,~(x^j)_{\textrm{supp}(\hat{r}^j)}=0,~~j=0,\ldots,p,
\end{align}
which implies that
\begin{align}\label{up-x-bd3-1}
\|(r^p)_{\Omega_k\setminus S}\|_2
&=\left\|\sum_{j=0}^p\beta^{p-j}(\hat{r}^j)_{\Omega_k\setminus S}\right\|_2
=\left\|\sum_{j=0}^p\beta^{p-j}(\hat{r}^j)_{\textrm{supp}(\hat{r}^j)\cap\Omega_k\setminus S}\right\|_2\nonumber\\
&=\left\|\sum_{j=0}^p\beta^{p-j}(x^j-x_S+\hat{r}^j)_{\textrm{supp}(\hat{r}^j)\cap\Omega_k\setminus S}\right\|_2\nonumber\\
&\leq \sum_{j=0}^p\beta^{p-j}\|(x^j-x_S+\hat{r}^j)_{\Omega_k\setminus S}\|_2.
\end{align}
Due to $(\Omega_k\setminus S)\cup S=\Omega_k\cup S$ and $(\Omega_k\setminus S)\cap S=\emptyset$, one has
 \begin{align*}
 \|(x^j     - x_S   + \hat{r}^j)_{S}\|_2   &   +\|(x^j-x_S+ \hat{r}^j)_{\Omega_k\setminus S}\|_2\nonumber\\
&\leq\sqrt{2}\sqrt{\|(x^j-x_S+ \hat{r}^j)_{S}\|_2^2+\|(x^j-x_S+ \hat{r}^j)_{\Omega_k\setminus S}\|_2^2} \nonumber \\
& = \sqrt{2}\|(x^j-x_S+ \hat{r}^j)_{\Omega_k\cup S }\|_2,
\end{align*}
which together with \eqref{up-x-bd1}-\eqref{up-x-bd3-1} implies that
\begin{align}\label{up-x-bd4}
\|(u^p-x_S)_S\|_2
\leq &\sqrt{2}\sum_{j=0}^p\beta^{p-j}\|(x^j-x_S+ \hat{r}^j)_{\Omega_k\cup S }\|_2\nonumber\\
& ~~~ +\sum_{j=0}^{p-1}\beta^{p-j}\|x^j-x_S\|_2+\|(r^p)_{\Omega_k\setminus \Omega_q}\|_2.
\end{align}
Note that $(\Omega_k\setminus \Omega_q)\cup\Omega_q=\Omega_k$ and $(\Omega_k\setminus \Omega_q)\cap\Omega_q=\emptyset. $ We have
\begin{align*}
\|(r^p)_{\Omega_k\setminus \Omega_q}\|_2^2+\|(r^p)_{\Omega_q}\|_2^2=\|(r^p)_{\Omega_k}\|_2^2.
\end{align*}
By Lemma \ref{lem-rp_Omg_qk}, we have
\begin{align}\label{up-x-bd5}
\|(r^p)_{\Omega_k\setminus \Omega_q}\|_2
\leq \sqrt{1-[g(\gamma)]^2}\|(r^p)_{\Omega_k}\|_2,
\end{align}
in which the term $\|(r^p)_{\Omega_k}\|_2$ can be bounded as
\begin{align}\label{up-x-bd6}
\|(r^p)_{\Omega_k}\|_2
&=\left\|\sum_{j=0}^p\beta^{p-j}(x^j-x_S+ \hat{r}^j)_{\Omega_k }-\sum_{j=0}^p\beta^{p-j}(x^j-x_S)_{\Omega_k }\right\|_2\nonumber\\
&\leq\sum_{j=0}^p\beta^{p-j}\|(x^j-x_S+ \hat{r}^j)_{\Omega_k }\|_2+\sum_{j=0}^p\beta^{p-j}\|x^j-x_S\|_2\nonumber\\
&\leq\sum_{j=0}^p\beta^{p-j}\|(x^j-x_S+ \hat{r}^j)_{\Omega_k \cup S }\|_2+\sum_{j=0}^p\beta^{p-j}\|x^j-x_S\|_2.
\end{align}
 From \eqref{up-x-bd4}-\eqref{up-x-bd6}, it is easy to obtain that
\begin{align}\label{up-x-bd7}
\|(u^p-x_S)_S\|_2
\leq &\left(\sqrt{2}+ \sqrt{1-[g(\gamma)]^2}\right)\sum_{j=0}^p\beta^{p-j}\|(x^j-x_S+ \hat{r}^j)_{\Omega_k\cup S }\|_2\nonumber\\
&+\sum_{j=0}^{p-1}\beta^{p-j}\|x^j-x_S\|_2+ \sqrt{1-[g(\gamma)]^2}\sum_{j=0}^p\beta^{p-j}\|x^j-x_S\|_2.
\end{align}
Since $\hat{r}^j=A^ T(y-Ax^j)$ and $y=Ax_S+\nu'$, we have
\begin{align}\label{xxpx}
x^j-x_S+ \hat{r}^j=(I-A^TA)(x^j-x_S)+ A^T\nu',~~ j=0,\ldots,p.
\end{align}
By using \eqref{xxpx} and triangle inequality, we see that for each $ j=0,\ldots,p,$
\begin{align}\label{up-x-bd8}
\|(x^j-x_S+ \hat{r}^j)_{\Omega_k\cup S}\|_2
\leq&\|[(I-A^TA)(x^j-x_S)]_{\Omega_k\cup S}\|_2+\|(A^T\nu')_{\Omega_k\cup S}\|_2\nonumber\\
\leq &\delta_{3k}\|x^j-x_S\|_2+\sqrt{1+\delta_{2k}}\|\nu'\|_2,
\end{align}
where the last inequality follows from Lemma \ref{lem-basic-ineq} with $|\textrm{supp}(x^j-x_S)\cup(\Omega_k \cup S)|\leq 3k$ and $|\Omega_k \cup S|\leq 2k$.
Inserting \eqref{up-x-bd8} into \eqref{up-x-bd7} yields
\begin{align}
\|(u^p-x_S)_S\|_2
\leq &\left(\sqrt{2}+ \sqrt{1-[g(\gamma)]^2}\right)\left(\delta_{3k}\sum_{j=0}^p\beta^{p-j}\|x^j-x_S\|_2+\frac{\sqrt{1+\delta_{2k}}}{1-\beta}\|\nu'\|_2\right)\nonumber\\
&+\sum_{j=0}^{p-1}\beta^{p-j}\|x^j-x_S\|_2+ \sqrt{1-[g(\gamma)]^2}\sum_{j=0}^p\beta^{p-j}\|x^j-x_S\|_2,\nonumber
\end{align}
which is \eqref{bound-up-xS} by setting $Q_{p-1}, Q_p, C_1$ and $ C_2$ as \eqref{Qp} and \eqref{C12}.
\end{proof}

 We need one more technical result before showing the main result.
\begin{Lem}\label{property-Gt}
For any given $\gamma\in (0,1]$, let $g(\gamma)$ be given by (\ref{g-gam}). Then the function
\begin{align}\label{Gt}
G(t)=\frac{1}{1-t}\left[\sqrt{2}t+t\sqrt{\frac{5+t}{1+t}}+\sqrt{1-(g(\gamma))^2}(1+t)\right]-1,~t\in [0,1)
\end{align}
is strictly increasing and has a unique root, denoted by $\delta(\gamma),$ in $(0,1).$
\end{Lem}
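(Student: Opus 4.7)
My plan is to prove the two assertions separately: strict monotonicity of $G$ on $[0,1)$, and the existence of a unique root via the intermediate value theorem combined with that monotonicity. To organise the calculation, I will write the bracketed expression as $h(t):=\sqrt{2}\,t+t\sqrt{(5+t)/(1+t)}+\sqrt{1-(g(\gamma))^2}\,(1+t)$, so that $G(t)=h(t)/(1-t)-1$. The strategy is to show that $h$ and $(1-t)^{-1}$ are both strictly positive and strictly increasing on $[0,1)$; their product is then strictly increasing, and so is $G$.

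For $(1-t)^{-1}$ the claim is immediate. For $h$, I would handle the three summands in turn. The term $\sqrt{2}\,t$ is obviously strictly increasing from $0$. The term $\sqrt{1-(g(\gamma))^2}\,(1+t)$ is strictly increasing because $g(\gamma)\in(0,1)$ by Lemma \ref{lem-rp_Omg_qk}, and its value at $t=0$ already guarantees $h(0)=\sqrt{1-(g(\gamma))^2}>0$. The only summand that requires a short computation is $t\sqrt{(5+t)/(1+t)}$, since it is a product of an increasing factor $t$ and a decreasing factor $\sqrt{(5+t)/(1+t)}$. The cleanest route is to square and differentiate: $\frac{d}{dt}\bigl[t^2(5+t)/(1+t)\bigr]=2t(t^2+4t+5)/(1+t)^2>0$ for $t\in(0,1)$, hence this summand is also strictly increasing on $[0,1)$. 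Summing the three, $h$ is strictly increasing and strictly positive on $[0,1)$.

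It remains to identify the boundary behaviour of $G$. At $t=0$ one computes $G(0)=\sqrt{1-(g(\gamma))^2}-1<0$ because $g(\gamma)\in(0,1)$. As $t\to 1^-$, the numerator $h(t)$ tends to the finite positive limit $\sqrt{2}+\sqrt{3}+2\sqrt{1-(g(\gamma))^2}$ while $(1-t)^{-1}\to+\infty$, so $G(t)\to+\infty$. Continuity of $G$ on $[0,1)$, together with the intermediate value theorem, produces at least one root in $(0,1)$, and strict monotonicity forces uniqueness; this unique root is $\delta(\gamma)$. I do not foresee any genuine obstacle in carrying out this plan; the only mild subtlety is the monotonicity of the middle summand, which the squared-derivative trick resolves cleanly.
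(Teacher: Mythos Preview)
Your proposal is correct and follows essentially the same approach as the paper: establish strict monotonicity of $G$ on $[0,1)$ by breaking the expression into increasing positive pieces, then apply the intermediate value theorem using $G(0)=\sqrt{1-(g(\gamma))^2}-1<0$ and $G(t)\to+\infty$ as $t\to1^-$. The only minor difference is in handling the awkward middle term: the paper factors $\dfrac{t}{1-t}\sqrt{\dfrac{5+t}{1+t}}=t\sqrt{5+t}\cdot\dfrac{1}{(1-t)\sqrt{1+t}}$ and observes $\dfrac{1}{(1-t)\sqrt{1+t}}=\dfrac{1}{\sqrt{1-t}\,\sqrt{1-t^2}}$ is increasing, whereas you differentiate the square of $t\sqrt{(5+t)/(1+t)}$; both work equally well.
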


\begin{proof} Note that $ \frac{t}{1-t}$ and $ \frac{1+t}{1-t}$ are strictly increasing in  $[0,1)$ and that
\begin{align}\label{Gt-part-inc}
\frac{1}{1-t}\cdot\frac{1}{\sqrt{1+t}}=\frac{1}{\sqrt{1-t}}\cdot\frac{1}{\sqrt{1-t^2}}
\end{align}
 is  strictly increasing in  $[0,1),$ so is  $\frac{t}{1-t} \sqrt{\frac{5+t}{1+t}}.$  Thus
the function $G(t)$  in \eqref{Gt}  is strictly increasing in  $[0,1)$ for any given $\gamma\in(0,1]$. For a fixed $\gamma\in(0,1]$, $G(t)$ is continuous function over $[0,1)$ satisfying that $G(0)=\sqrt{1-[g(\gamma)]^2}-1<0$ and $\lim_{t\rightarrow 1^-}G(t)=+\infty$. Thus,  $G(t)=0$ has a unique root in $(0,1),$  denoted by $ \delta(\gamma).$
\end{proof}

\begin{Rem} \emph{Compared with the analysis of related algorithms,  the main difficulty in the analysis of this paper (due to appearance of generalized means functions) is to establish some new fundamental technical results that are used to show the main result.  Lemmas \ref{lem-rp_Omg_qk}  and  \ref{bound_up-x} are among such technical results. In Lemma \ref{lem-rp_Omg_qk}, we establish the relation of $\|(r^p)_{\Omega_q}\|_{2}$ ($q\leq k$) and $\|(r^p)_{\Omega_k}\|_2$, which is rooted on the convexity and  monotonicity of the generalized mean function.  Furthermore,  with the aid of Lemma \ref{lem-rp_Omg_qk}, we establish in Lemma \ref{bound_up-x} the upper bound of  $\|(u^p-x_S)_S\|_2$ in terms of the linear combination of $\|x^j-x_S\|_2,~j=0,\ldots,p.$ This bound is essential to establish the solution error bound of DTAM which are summarized in the theorem below. Moreover, as a by-product of our analysis (see Corollary \ref{Cor-PG} for details), we can also establish the error bound of PGROTP for the case  $\bar{q}=k$, which has not obtained based on the analysis in \cite{MZKS22}.}
\end{Rem}

The main result for DTAM is stated as follows.

\begin{Thm}\label{main-thm}
 Let $x \in \mathbb{R}^n$ be the solution to the linear inverse problem $y = Ax+\nu$ where $\nu$ is a noise vector.
For any given $\gamma\in (0,1]$, suppose that the RIC, $\delta_{3k},$ of  matrix A and the forgetting factor $\beta$ satisfy that
 \begin{align}\label{beta-range}
\delta_{3k}<\delta(\gamma),~~ 0\leq\beta<\frac{2\tilde{\varrho}}{\delta_{2k}+\sqrt{(\delta_{2k})^2+4\tilde{\varrho}(1-\delta_{2k})}}-\tilde{\varrho},
 \end{align}
 where $\delta(\gamma)\in(0,1)$ is given in Lemma \ref{property-Gt} and
 \begin{align}\label{rho-tilde}
\tilde{\varrho}:=&\frac{1}{1-\delta_{2k}}\left(C_1+\delta_{3k}\sqrt{\frac{5+\delta_{2k}}{1+\delta_{2k}}}\right) <1
\end{align}
with $C_1$ is given by \eqref{C12}. Then
 the sequence $\{x^{p}\}$ generated by DTAM satisfies
\begin{align}\label{er-bd}
\|x^{p}-x_S\|_2\leq \varrho^p\|x^{0}-x_S\|_2+\frac{C_\beta}{1-\varrho}\|\nu'\|_2,
\end{align}
where $S=\mathcal{L}_k(x)$, $\nu'=Ax_{\overline{S}}+\nu =y-A x_S,$ and
\begin{align}\label{rho-C}
\varrho &:=\tilde{\varrho}+\beta+\frac{\beta}{(1-\delta_{2k})(\tilde{\varrho}+\beta)}<1,\\
C_\beta &:=\frac{1}{1-\delta_{2k}}\left[\frac{C_2+\sqrt{5+\delta_{2k}}}{1-\beta}+\frac{2}{\sqrt{1+\delta_{2k}}}+\sqrt{1+\delta_{k}}\right],\nonumber
\end{align}
in which $C_2$ is given by \eqref{C12}.
\end{Thm}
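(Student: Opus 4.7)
My plan is to derive a one-step recursive inequality for the error $E_p := \|x^p - x_S\|_2$ — one that unavoidably involves the memory-weighted tail $Q_p = \sum_{j=0}^p \beta^{p-j} E_j$ — and then unravel this recursion to obtain the geometric bound \eqref{er-bd}. The natural starting point is step S3 of DTAM: since $x^{p+1}$ minimizes $\|y-Ax\|_2^2$ over $\mathrm{supp}(x)\subseteq S^{p+1}$ with $|S^{p+1}|\leq k$, Lemma \ref{lem-pursuit} immediately yields
\[
E_{p+1} \;\leq\; \frac{1}{\sqrt{1-(\delta_{2k})^2}}\,\|(x_S)_{\overline{S^{p+1}}}\|_2 + \frac{\sqrt{1+\delta_k}}{1-\delta_{2k}}\|\nu'\|_2,
\]
so the whole task reduces to bounding $\|(x_S)_{\overline{S^{p+1}}}\|_2$ in terms of $\|(u^p-x_S)_S\|_2$ and then invoking Lemma \ref{bound_up-x}.

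For the bound on $\|(x_S)_{\overline{S^{p+1}}}\|_2$ I would split according to which branch of S2 is taken. If $|V^p|>k$, Lemma \ref{lem-hard-thres} applied with $h=x_S$ and $u=u^p\circ w^p$ (noting that $\mathcal{H}_k(u^p\circ w^p)$ vanishes off $S^{p+1}$) gives $\|(x_S)_{\overline{S^{p+1}}}\|_2\leq 2\|(u^p\circ w^p-x_S)_{S\cup S^{p+1}}\|_2$. I would then pass to $A$-norms via the RIP of order at most $3k$ and exploit the optimality of $w^p$ in \eqref{alg-DSROTP-2} against the feasible competitor $\tilde w$ equal to the indicator of a best $k$-subset of $V^p$; combined with Lemma \ref{lem-basic-ineq} this produces $\|(x_S)_{\overline{S^{p+1}}}\|_2 \leq \eta\|(u^p-x_S)_S\|_2+\mu\|\nu'\|_2$ with $\eta,\mu$ depending explicitly on $\delta_{2k},\delta_{3k}$. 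After multiplying by $1/\sqrt{1-(\delta_{2k})^2}$ these combine with the RIP bounds to produce the precise $\delta_{3k}\sqrt{(5+\delta_{2k})/(1+\delta_{2k})}$ and $\sqrt{5+\delta_{2k}}$ contributions visible in $\tilde\varrho$ and $C_\beta$. The branch $|V^p|\leq k$ is simpler because $\mathrm{supp}(u^p)\subseteq V^p=S^{p+1}$ gives $(x_S)_{\overline{S^{p+1}}}=(x_S-u^p)_{\overline{S^{p+1}}}$ directly, and the same template applies with a smaller constant. Substituting Lemma \ref{bound_up-x} then yields a recursion of the shape
\[
E_{p+1} \;\leq\; \tilde\varrho\,Q_p + \kappa\beta\,Q_{p-1} + D\|\nu'\|_2,
\]
with $\tilde\varrho$ exactly as in \eqref{rho-tilde}; the hypothesis $\delta_{3k}<\delta(\gamma)$ forces $\tilde\varrho<1$ via Lemma \ref{property-Gt}.

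The main obstacle is the last step: turning this recursion with memory terms into the clean geometric bound \eqref{er-bd}. Using $Q_p=E_p+\beta Q_{p-1}$ the inequality becomes $E_{p+1}\leq \tilde\varrho E_p+(\tilde\varrho\beta+\kappa\beta)Q_{p-1}+D\|\nu'\|_2$, which is not yet a one-term contraction. I would seek a Lyapunov quantity $\Phi_p:=E_p+\lambda Q_{p-1}$ for an appropriate $\lambda>0$ satisfying $\Phi_{p+1}\leq \varrho\,\Phi_p+C_\beta\|\nu'\|_2$; matching coefficients leads to a quadratic equation in $\lambda$ whose non-negativity of the discriminant translates exactly into the upper bound on $\beta$ in \eqref{beta-range}, and whose admissible root generates the extra $\beta/[(1-\delta_{2k})(\tilde\varrho+\beta)]$ correction in the formula \eqref{rho-C} for $\varrho$. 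Iterating $\Phi_{p+1}\leq \varrho\Phi_p+C_\beta\|\nu'\|_2$ and summing the resulting geometric series in $\|\nu'\|_2$ finally delivers \eqref{er-bd}, after noting $\Phi_0=E_0=\|x^0-x_S\|_2$ since $Q_{-1}=0$. Verifying that the Lyapunov ansatz reproduces the \emph{exact} expressions \eqref{rho-C} for $\varrho$ and $C_\beta$ — and that the $\beta$-threshold in \eqref{beta-range} is sharp for this construction — is the delicate part of the argument; everything else is RIP bookkeeping based on Lemmas \ref{lem-basic-ineq}--\ref{bound_up-x}.
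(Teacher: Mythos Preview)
Your outline captures the right three-stage architecture (projection estimate $\to$ bound involving $u^p$ $\to$ unravel the memory recursion), but there is a genuine gap in the middle stage and a mismatch in the final one.

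\textbf{The gap: you cannot reduce to $\|(u^p-x_S)_S\|_2$ alone.} Your claim that RIP manipulations and the optimality of $w^p$ yield $\|(x_S)_{\overline{S^{p+1}}}\|_2 \leq \eta\|(u^p-x_S)_S\|_2+\mu\|\nu'\|_2$ is not achievable: the support $S^{p+1}=\mathcal L_k(u^p\circ w^p)$ is determined by the values of $u^p$ on \emph{all} of $V^p$, so spurious mass of $u^p$ on $V^p\setminus S$ can corrupt $S^{p+1}$ regardless of how small $(u^p-x_S)_S$ is. The paper's bound (inequality \eqref{xp1-x-1}) therefore necessarily carries \emph{two} terms, $\|(x_S-u^p)_S\|_2$ and $\|(x_S-u^p)_{V^p\setminus S}\|_2$, and it is the latter --- controlled separately by an RIP estimate analogous to \eqref{up-x-bd8} (see \eqref{up-x-CS}) --- that produces the $\delta_{3k}\sqrt{(5+\delta_{2k})/(1+\delta_{2k})}$ contribution to $\tilde\varrho$. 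Relatedly, your competitor $\tilde w$ (``indicator of a best $k$-subset of $V^p$'') is too vague: the paper chooses a very specific $\hat w\in\{0,1\}^n$ with $\hat w_i=1$ on $V^p\cap S$ and $\hat w_j=0$ on $V^p\setminus(S^{p+1}\cup S)$, engineered so that $\|x_S-u^p\circ\hat w\|_2$ splits cleanly into $\|(x_S-u^p)_S\|_2+\|(x_S-u^p)_{S^{p+1}\setminus S}\|_2$; this, combined with a Cauchy--Schwarz trick \eqref{inq_Delta2} over the partition $V^p\setminus S=[V^p\setminus(S\cup S^{p+1})]\cup(S^{p+1}\setminus S)$, is what generates the $\sqrt{5+\delta_{2k}}$ factor. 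Your sketch does not supply this mechanism.

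\textbf{The final step: different route, different constant.} For the recursion $E_{p+1}\leq\tilde\varrho Q_p+\frac{\beta}{1-\delta_{2k}}Q_{p-1}+C_\beta\|\nu'\|_2$, the paper does \emph{not} use a Lyapunov functional but straight induction: it assumes \eqref{er-bd} for $j\le p$, bounds $Q_i\le \varrho^{i+1}/(\varrho-\beta)\cdot E_0+\ldots$, substitutes, and checks that the coefficient of $E_0$ collapses back to $\varrho^{p+1}$ via the inequality $\varrho^2-(\tilde\varrho+\beta)\varrho-\beta/(1-\delta_{2k})\ge 0$. Your Lyapunov ansatz $\Phi_p=E_p+\lambda Q_{p-1}$ is a legitimate alternative and leads to the same quadratic, but its admissible root is $\frac12\big[(\tilde\varrho+\beta)+\sqrt{(\tilde\varrho+\beta)^2+4\beta/(1-\delta_{2k})}\big]$, which is strictly \emph{smaller} than the paper's explicit $\varrho$ in \eqref{rho-C} (the latter is the first-order upper bound $\sqrt{a^2+b}\le a+b/(2a)$ applied to that root). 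So your approach would not ``reproduce the exact expression \eqref{rho-C}'' as you assert; it would give a tighter rate, from which \eqref{er-bd} with the paper's $\varrho$ of course still follows a fortiori.
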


\begin{proof} The proof  is partitioned into the three parts.

{\it Part I. We first show that under the condition of the theorem, the constants $ \tilde{\varrho},\varrho$  in \eqref{rho-tilde} and \eqref{rho-C} are smaller than 1, and that the range for $\beta$ in \eqref{beta-range} is well-defined. }

In fact, since the function in \eqref{Gt-part-inc} is strictly increasing in  $[0,1)$, from the fact $\delta_{2k}\leq\delta_{3k}<\delta(\gamma)<1$, we immediately see that
\begin{align}\label{parti324}
\frac{1}{1-\delta_{2k}}\cdot\frac{1}{\sqrt{1+\delta_{2k}}}\leq \frac{1}{1-\delta_{3k}}\cdot\frac{1}{\sqrt{1+\delta_{3k}}}.
\end{align}
It follows from \eqref{C12}, \eqref{rho-tilde} and Lemma \ref{property-Gt} that
$$
\tilde{\varrho}\leq G(\delta_{3k})+1<G(\delta(\gamma))+1=1,
$$
where the second inequality follows from the fact that $G(t)$  is strictly increasing in  $[0,1)$ and the equality follows from $G(\delta(\gamma))=0$. Since $\tilde{\varrho}<1$ and $\delta_{2k}<1$, we have
$$\frac{2\tilde{\varrho}}{\delta_{2k}+\sqrt{(\delta_{2k})^2+4\tilde{\varrho}(1-\delta_{2k})}}
>\frac{2\tilde{\varrho}}{\delta_{2k}+\sqrt{(\delta_{2k})^2+4(1-\delta_{2k})}}=\frac{2\tilde{\varrho}}{\delta_{2k}+\sqrt{(2-\delta_{2k})^2}}=\tilde{\varrho}.$$
Thus  the range for $\beta$ in \eqref{beta-range} is well-defined. By setting $\zeta:=\frac{1}{1-\delta_{2k}}(>1),$ the second inequality
in \eqref{beta-range} can be written as
 $$
0\leq\beta<\left(\sqrt{(2\tilde{\varrho}+\zeta-1)^2+4\tilde{\varrho}(1-\tilde{\varrho})}-(2\tilde{\varrho}+\zeta-1)\right)/2.
$$
This implies that
$$
\beta^2+(2\tilde{\varrho}+\zeta-1)\beta-\tilde{\varrho}(1-\tilde{\varrho})<0,
$$
which is equivalent to $\varrho<1,$ as sated in  \eqref{rho-C}.

{\it Part II. We now estimate the term $\|x^{p+1}-x_S\|_2$ in terms of $\|(x_S-u^p)_{V^p\setminus S}\|_2$  and  $\|(x_S-u^p)_S\|_2.$ The upper bound for this term is key to establishing the desired error bound in \eqref{er-bd}.}

 {\bf Case 1. $|V^p|>k$.} In this case, $S^{p+1}=\mathcal{L}_k(u^p \circ w^p)\subset V^p$ and $u^p$, $w^p$ are given by \eqref{SDir} and \eqref{alg-DSROTP-2} respectively. Set $u^*=x^{p+1}, ~\Omega=S^{p+1}$ and  $S=\mathcal{L}_k(x)$  in Lemma \ref{lem-pursuit}, we get
\begin{align}\label{er-bd-projection}
\|x^{p+1}-x_S\|_2\leq \frac{1}{\sqrt{1-(\delta_{2k})^2}}\| (x_S)_{\overline{S^{p+1}}}\|_2+\frac{ \sqrt{1+\delta_{k}}}{1-\delta_{2k}}\|\nu' \|_2.
\end{align}
Since $\textrm{supp}(\mathcal{H}_k(u^p \circ w^p))\subseteq S^{p+1}$, it follows that
\begin{align}\label{xp1-Hkup}
\|x^{p+1}-x_S\|_2
\leq&\frac{1}{\sqrt{1-(\delta_{2k})^2}}\|( \mathcal{H}_k(u^p \circ w^p)-x_S)_{\overline{S^{p+1}}}\|_2+\frac{ \sqrt{1+\delta_{k}}}{1-\delta_{2k}}\|\nu' \|_2\nonumber\\
\leq&\frac{1}{\sqrt{1-(\delta_{2k})^2}}\|\mathcal{H}_k(u^p \circ w^p)-x_S\|_2+\frac{ \sqrt{1+\delta_{k}}}{1-\delta_{2k}}\|\nu' \|_2.
\end{align}
 Now, we can bound the term  $\|x_S-\mathcal{H}_k(u^p \circ w^p)\|_2$ by using $\|(x_S-u^p)_{V^p\setminus S}\|_2$, $\|(x_S-u^p)_S\|_2$ and  $\|\nu' \|_2$. By  Lemma \ref{lem-hard-thres}, we have
\begin{align}\label{x-Hk-uw}
\|x_S-\mathcal{H}_k(u^p \circ w^p)\|_2\leq \|(x_S-u^p \circ w^p)_{S\cup S^{p+1}}\|_2+\|(x_S-u^p \circ w^p)_{ S^{p+1}\setminus S}\|_2.
\end{align}
Note that $y=Ax_S+\nu'$ with $\nu'=Ax_{\overline{S}}+\nu$ and $\textrm{supp}(u^p)\subseteq V^p. $ Using the triangle inequality leads to
\begin{align*}
 \|y  & -A(u^p \circ w^p)\|_2\nonumber\\
&=\|A(x_S-u^p \circ w^p)_{S\cup S^{p+1}}+A(x_S-u^p \circ w^p)_{V^p\setminus (S\cup S^{p+1})}+\nu'\|_2\nonumber\\
&\geq\|A(x_S-u^p \circ w^p)_{S\cup S^{p+1}}\|_2-\|A(x_S-u^p \circ w^p)_{V^p\setminus (S\cup S^{p+1})}\|_2-\|\nu'\|_2\nonumber\\
&\geq\sqrt{1-\delta_{2k}}\|(x_S-u^p \circ w^p)_{S\cup S^{p+1}}\|_2 \nonumber \\
 & ~~~~ -\sqrt{1+\delta_{2k}}\|(x_S-u^p \circ w^p)_{V^p\setminus (S\cup S^{p+1})}\|_2-\|\nu'\|_2,
\end{align*}
where the last inequality follows  from \eqref{def-RIC-1} with $|S\cup S^{p+1}|\leq 2k$ and $|V^p\setminus (S\cup S^{p+1})|\leq 2k$. Thus
\begin{align}\label{xuw-bd1}
\|(x_S-u^p \circ w^p)_{S\cup S^{p+1}}\|_2\leq &\sqrt{\frac{1+\delta_{2k}}{1-\delta_{2k}}}\|(x_S-u^p \circ w^p)_{V^p\setminus (S\cup S^{p+1})}\|_2\nonumber\\
&+\frac{1}{\sqrt{1-\delta_{2k}}}(\|y-A(u^p \circ w^p)\|_2+\|\nu'\|_2).
\end{align}
Due to $(x_S)_{V^p\setminus (S\cup S^{p+1})}=(x_S)_{ S^{p+1}\setminus S}=0$ and   $0\leq w^p \leq  {\bf e}$, we obtain
\begin{align*}
\|(x_S-u^p \circ w^p)_{V^p\setminus (S\cup S^{p+1})}\|_2 &=\|[(x_S-u^p) \circ w^p]_{V^p\setminus (S\cup S^{p+1})}\|_2 \\ & \leq \|(x_S-u^p)_{V^p\setminus (S\cup S^{p+1})}\|_2
\end{align*}
and
\begin{align*}
\|(x_S-u^p \circ w^p)_{ S^{p+1}\setminus S}\|_2&=\|[(x_S-u^p) \circ w^p]_{ S^{p+1}\setminus S}\|_2
\leq \|(x_S-u^p)_{ S^{p+1}\setminus S}\|_2.
\end{align*}
Combining  the two inequalities above with \eqref{x-Hk-uw} and \eqref{xuw-bd1} yields
\begin{align}\label{x-Hk-uw1}
\|x_S-\mathcal{H}_k(u^p \circ w^p)\|_2\leq &\sqrt{\frac{1+\delta_{2k}}{1-\delta_{2k}}}\|(x_S-u^p)_{V^p\setminus (S\cup S^{p+1})}\|_2+\|(x_S-u^p)_{ S^{p+1}\setminus S}\|_2\nonumber\\
&+\frac{1}{\sqrt{1-\delta_{2k}}}(\|y-A(u^p \circ w^p)\|_2+\|\nu'\|_2).
\end{align}
We now further estimate the term $\|y-A(u^p \circ w^p)\|_2. $   Note that $S^{p+1}\subset V^p$ and  $|S|=|S^{p+1}|=k.$ Let $\hat{w}\in\{0,1\}^n$ be a $k$-sparse vector in the feasible set of the problem \eqref{alg-DSROTP-2} such that $\hat{w}_i=1 $ for all $i\in V^p\cap S  $ and $\hat{w}_j=0$ for all $j\in V^p\setminus (S^{p+1}\cup S). $ Then
\begin{align}\label{yAuwp-bd}
\|y-A(u^p \circ w^p)\|_2\leq&\|y-A(u^p \circ \hat{w})\|_2
\leq\|A(x_S-u^p \circ \hat{w})\|_2+\|\nu'\|_2\nonumber\\
\leq&\sqrt{1+\delta_{2k}}\|x_S-u^p \circ \hat{w}\|_2+\|\nu'\|_2,
\end{align}
where  the first inequality is due to $w^p$ being the optimal solution to \eqref{alg-DSROTP-2}, the second inequality follows from $y= Ax_S+\nu'$, and the  third follows from \eqref{def-RIC-1} since   $x_S-u^p \circ \hat{w}$ is  $(2k)$-sparse. Note that  $$(u^p \circ \hat{w})_{V^p\cap S}=(u^p)_{V^p\cap S}, ~\hat{w}_{V^p\setminus S}=\hat{w}_{S^{p+1}\setminus S} $$ and $\textrm{supp}(u^p)\subseteq V^p$ and $(x_S)_{S^{p+1}\setminus S}=0.$  We deduce that
\begin{align}\label{xup_omhat}
\|x_S-u^p \circ \hat{w}\|_2
&=\|x_S-(u^p)_{V^p\cap S}-u^p \circ \hat{w}_{V^p\setminus S}\|_2\nonumber\\
&=\|x_S-(u^p)_{S}+(x_S-u^p) \circ \hat{w}_{S^{p+1}\setminus S}\|_2\nonumber\\
&\leq\|(x_S-u^p)_{S}\|_2+\|(x_S-u^p)_{S^{p+1}\setminus S}\|_2.
\end{align}
Inserting \eqref{xup_omhat} into \eqref{yAuwp-bd} leads to
\begin{align}\label{yAuwp-bd-1}
\|y-A(u^p \circ w^p)\|_2
\leq&\sqrt{1+\delta_{2k}}(\|(x_S-u^p)_{S}\|_2+\|(x_S-u^p)_{S^{p+1}\setminus S}\|_2)+\|\nu'\|_2.
\end{align}
Merging \eqref{x-Hk-uw1} with \eqref{yAuwp-bd-1} leads to
\begin{align}\label{x-Hk-uw1-0}
&\|x_S-\mathcal{H}_k(u^p \circ w^p)\|_2\nonumber\\
&\leq\sqrt{\frac{1+\delta_{2k}}{1-\delta_{2k}}}\|(x_S-u^p)_{V^p\setminus (S\cup S^{p+1})}\|_2+\left(\sqrt{\frac{1+\delta_{2k}}{1-\delta_{2k}}}+1\right)
\|(x_S-u^p)_{ S^{p+1}\setminus S}\|_2\nonumber\\
&~~~ +\sqrt{\frac{1+\delta_{2k}}{1-\delta_{2k}}}\|(x_S-u^p)_S\|_2+\frac{2}{\sqrt{1-\delta_{2k}}}\|\nu'\|_2.
\end{align}
Denote by $$\Delta_1:=\|(x_S-u^p)_{V^p\setminus (S\cup S^{p+1})}\|_2, ~ \Delta_2:=\|(x_S-u^p)_{S^{p+1}\setminus S}\|_2$$ and $\Delta:=\|(x_S-u^p)_{V^p\setminus S}\|_2$. Note that $V^p\setminus S=[V^p\setminus (S\cup S^{p+1})]\cup (S^{p+1}\setminus S)$ and  $[V^p\setminus (S\cup S^{p+1})]\cap (S^{p+1}\setminus S)=\emptyset.$ We have $\Delta_1^2+\Delta_2^2=\Delta^2$. Hence, for any given number $a,b>0,$  we have
\begin{align}\label{inq_Delta}
a\Delta_1+b\Delta_2\leq \sqrt{b^2+a^2}\sqrt{\Delta_1^2+\Delta_2^2}=\sqrt{b^2+a^2}\Delta.
\end{align}
In particular, if $b=a+1$, then \eqref{inq_Delta} becomes
\begin{align}\label{inq_Delta2}
a\Delta_1+(a+1)\Delta_2\leq \sqrt{(a+1)^2+a^2}\Delta\leq \sqrt{3a^2+2}\Delta.
\end{align}
By setting $a=\sqrt{\frac{1+\delta_{2k}}{1-\delta_{2k}}}$ in \eqref{inq_Delta2}, we see that \eqref{x-Hk-uw1-0} becomes
\begin{align*}
&\|x_S-\mathcal{H}_k(u^p \circ w^p)\|_2\nonumber\\
&\leq\sqrt{\frac{5+\delta_{2k}}{1-\delta_{2k}}}\|(x_S-u^p)_{V^p\setminus S}\|_2+\sqrt{\frac{1+\delta_{2k}}{1-\delta_{2k}}}\|(x_S-u^p)_S\|_2+\frac{2}{\sqrt{1-\delta_{2k}}}\|\nu'\|_2.
\end{align*}
Substituting this into \eqref{xp1-Hkup} leads to
\begin{align}\label{xp1-x-1}
\|x^{p+1}-x_S\|_2
\leq&\frac{1}{1-\delta_{2k}}\left(\sqrt{\frac{5+\delta_{2k}}{1+\delta_{2k}}}\|(x_S-u^p)_{V^p\setminus S}\|_2+\|(x_S-u^p)_S\|_2\right)\nonumber\\
&+\frac{1}{1-\delta_{2k}}\left(\frac{2}{\sqrt{1+\delta_{2k}}}+\sqrt{1+\delta_{k}}\right)\|\nu'\|_2.
\end{align}
 {\bf Case 2. $|V^p|\leq k$.}   In this case, $\textrm{supp}(u^p)\subseteq V^p=S^{p+1}$, and hence $(u^p)_{\overline{S^{p+1}}}=0$. Thus,
\begin{align}\label{up-xSp}
\| (x_S)_{\overline{S^{p+1}}}\|_2=\| (x_S)_{S\setminus{S^{p+1}}}\|_2=\| (u^p-x_S)_{S\setminus{S^{p+1}}}\|_2\leq \| (u^p-x_S)_S\|_2.
\end{align}
Substituting \eqref{up-xSp} into \eqref{er-bd-projection}, we have
\begin{align*}
\|x^{p+1}-x_S\|_2
\leq &\frac{1}{\sqrt{1-(\delta_{2k})^2}}\| (u^p-x_S)_{S}\|_2+\frac{ \sqrt{1+\delta_{k}}}{1-\delta_{2k}}\|\nu' \|_2.
\end{align*}
Compared with \eqref{xp1-x-1}, the inequality \eqref{xp1-x-1} remains valid for the case  $|V^p|\leq k$.

{\it Part III. We now further establish the error bound \eqref{er-bd} via the mathematical induction.}

 (i) Clearly, \eqref{er-bd}  holds for $p=0$.

 (ii) For $p\geq 1,$  assume that
\begin{align}\label{xj-er-bd}
\|x^{j}-x_S\|_2\leq \varrho^j\|x^{0}-x_S\|_2+\frac{C_\beta}{1-\varrho}\|\nu'\|_2
\end{align}
holds for $j=0,\ldots, p.$ We need to show that \eqref{xj-er-bd} holds for $j=p+1$. Similar to \eqref{bound-up-xS}, the upper bound of $\|(x_S-u^p)_{V^p\setminus S}\|_2$ can be determined in terms of $Q_p$ and $\|\nu' \|_2$.
By the definition of $ u^p$ in \eqref{alg-DSROTP-1} and noting that  $V^p=\textrm{supp}(x^p)\cup\Omega_q$ and $r^p=\sum_{j=0}^p\beta^{p-j}\hat{r}^j, $  we obtain
\begin{align}\label{up-x-CS-0-0}
\|(u^p-x_S)_{V^p\setminus S}\|_2
=\left\|(x^p+ (\hat{r}^p)_{\Omega_q}-x_S)_{V^p\setminus S}+\sum_{j=0}^{p-1}\beta^{p-j}(\hat{r}^j)_{\Omega_q\setminus S}\right\|_2.
\end{align}
From \eqref{relation-r-x}, we see that $(\hat{r}^p)_{\Omega_q}=(\hat{r}^p)_{V^p}$ and $(\hat{r}^j)_{\Omega_q\setminus S}=(x^j-x_S+\hat{r}^j)_{\textrm{supp}(\hat{r}^j)\cap \Omega_q\setminus S}$. It follows from \eqref{up-x-CS-0-0} that
\begin{align}\label{up-x-CS-0}
\|(u^p-x_S)_{V^p\setminus S}\|_2
=& \left\|(x^p+ \hat{r}^p-x_S)_{V^p\setminus S}+\sum_{j=0}^{p-1}\beta^{p-j}(x^j-x_S+\hat{r}^j)_{\textrm{supp}(\hat{r}^j)\cap \Omega_q\setminus S}\right\|_2\nonumber\\
\leq & \left\|(x^p+ \hat{r}^p-x_S)_{V^p\setminus S}\right\|_2+\sum_{j=0}^{p-1}\beta^{p-j}\left\|(x^j-x_S+\hat{r}^j)_{ \Omega_q\setminus S}\right\|_2.
\end{align}
 Similar to \eqref{up-x-bd8}, replacing the index set $\Omega_k \cup S$ with $V^p\setminus S$ and $\Omega_q\setminus S$ respectively, we obtain that
$$
\|(x^p-x_S+\hat{r}^p)_{ V^p\setminus S}\|_2\leq  \delta_{3k}\|x^p-x_S\|_2+\sqrt{1+\delta_{2k}}\|\nu'\|_2
$$
and
$$
\|(x^j-x_S+\hat{r}^j)_{\Omega_q\setminus S}\|_2\leq  \delta_{3k}\|x^j-x_S\|_2+\sqrt{1+\delta_{2k}}\|\nu'\|_2,
~~j=0,\ldots,p-1.
$$
Combining the two inequalities above with \eqref{up-x-CS-0} leads to
\begin{align}\label{up-x-CS}
\|(u^p-x_S)_{V^p\setminus S}\|_2
\leq \delta_{3k} Q_p+\frac{\sqrt{1+\delta_{2k}}}{1-\beta}\|\nu'\|_2,
\end{align}
where $Q_p$ is given by \eqref{Qp}.
With the aid of  \eqref{bound-up-xS}  and \eqref{up-x-CS}, the inequality \eqref{xp1-x-1} can be written further as
\begin{align}\label{xp1-x-0}
\|x^{p+1}-x_S\|_2
\leq&\frac{1}{1-\delta_{2k}}\left[\left(C_1+\delta_{3k}\sqrt{\frac{5+\delta_{2k}}{1+\delta_{2k}}} \right)Q_p+\beta Q_{p-1}\right]+C_\beta\|\nu'\|_2\nonumber\\
=&\tilde{\varrho} Q_p+ \frac{\beta }{1-\delta_{2k}} Q_{p-1}+C_\beta\|\nu'\|_2,
\end{align}
where $\tilde{\varrho}$, $C_\beta$ are given by \eqref{rho-tilde} and \eqref{rho-C}, respectively.
Inserting \eqref{xj-er-bd}  into \eqref{Qp} leads to
\begin{align}\label{Qp-varrho}
Q_i&\leq \sum_{j=0}^i\beta^{i-j}\varrho^j\|x^{0}-x_S\|_2+\frac{C_\beta(1-\beta^{i+1})}{(1-\varrho)(1-\beta)}\|\nu'\|_2\nonumber\\
&\leq\varrho^i\frac{1-(\beta/\varrho)^{i+1}}{1-\beta/\varrho}\|x^{0}-x_S\|_2+\frac{C_\beta}{(1-\varrho)(1-\beta)}\|\nu'\|_2\nonumber\\
&\leq\frac{\varrho^{i+1}}{\varrho-\beta}\|x^{0}-x_S\|_2+\frac{C_\beta}{(1-\varrho)(1-\beta)}\|\nu'\|_2
\end{align}
for $i=p-1,p$, in which the condition $\beta<\varrho<1$ is used  and $\varrho$  is given by \eqref{rho-C}.
Substituting \eqref{Qp-varrho} into \eqref{xp1-x-0}  yields
\begin{align}\label{xp1-x}
\|x^{p+1}-x_S\|_2
\leq&\tilde{\varrho} \left(\frac{\varrho^{p+1}}{\varrho-\beta}\|x^{0}-x_S\|_2+\frac{C_\beta}{(1-\varrho)(1-\beta)}\|\nu'\|_2\right)\nonumber\\
&+\frac{\beta }{1-\delta_{2k}} \left(\frac{\varrho^{p}}{\varrho-\beta}\|x^{0}-x_S\|_2+\frac{C_\beta}{(1-\varrho)(1-\beta)}\|\nu'\|_2\right)+C_\beta\|\nu'\|_2\nonumber\\
\leq&\left(\varrho\tilde{\varrho} +\frac{\beta }{1-\delta_{2k}} \right)\frac{\varrho^{p}}{\varrho-\beta}\|x^{0}-x_S\|_2\nonumber\\
&+\left[\left(\tilde{\varrho} +\frac{\beta }{1-\delta_{2k}} \right)\frac{1}{1-\beta}+1-\varrho\right]\frac{C_\beta}{1-\varrho}\|\nu'\|_2.
\end{align}
To simplify \eqref{xp1-x}, we need to  estimate the coefficients of $\|x^{0}-x_S\|_2$ and $\|\nu'\|_2$.
Using the definition of $\varrho$  in \eqref{rho-C}, we have
\begin{align}
\varrho
\geq&\tilde{\varrho}+\beta+\frac{2\beta}{(1-\delta_{2k})\left(\tilde{\varrho}+\beta+\sqrt{(\tilde{\varrho}+\beta)^2+\frac{4\beta}{1-\delta_{2k}}}\right)}\nonumber\\
=&\tilde{\varrho}+\beta+\frac{\sqrt{(\tilde{\varrho}+\beta)^2+\frac{4\beta}{1-\delta_{2k}}}-\tilde{\varrho}-\beta}{2}\nonumber\\
=&\frac{\tilde{\varrho}+\beta+\sqrt{(\tilde{\varrho}+\beta)^2+\frac{4\beta}{1-\delta_{2k}}}}{2}\nonumber,
\end{align}
which implies that
$
\varrho^2-\varrho(\tilde{\varrho}+\beta)-\frac{\beta}{1-\delta_{2k}}\geq 0.
$
This is equivalent to
\begin{align}\label{coefficient-1}
\frac{1}{\varrho-\beta}\left(\varrho\tilde{\varrho}+\frac{\beta}{1-\delta_{2k}}\right)\leq\varrho.
\end{align}
It follows from $\varrho<1$ in \eqref{rho-C} that
\begin{align}\label{coefficient-2}
\left(\tilde{\varrho}+\frac{\beta }{1-\delta_{2k}} \right)\frac{1}{1-\beta}+1-\varrho
\leq\frac{\varrho-\beta}{1-\beta}+1-\varrho\leq 1.
\end{align}
By \eqref{coefficient-1} and \eqref{coefficient-2}, we obtain from \eqref{xp1-x} the  inequality
$$
\|x^{p+1}-x_S\|_2
\leq\varrho^{p+1}\|x^{0}-x_S\|_2+\frac{C_\beta}{1-\varrho}\|\nu'\|_2.
$$
Thus \eqref{xj-er-bd}   holds for $j=p+1$. We  conclude that
\eqref{er-bd} holds for all nonnegative integers $p$.
\end{proof}

\begin{Rem} \label{Rem-Zhao}  \emph{The main result in this section discloses the theoretical (guaranteed) performance  of the DTAM under the condition (\ref{beta-range}). This condition also indicates that the choice of general mean functions may influence the performance of the algorithm. From Theorem \ref{main-thm},  one can see that the selection of the generalized mean function would determine the value of $g(\gamma)$  and thus directly affect the constants $ C_1, C_2, \rho $,$\varrho$, $\tilde{\varrho}$ and $ \delta(\gamma). $ This influences the error bound and condition (\ref{beta-range})  itself.  More specifically,  let us assume the target data $x$ being $k$-sparse and $\nu=0$, and thus $\|\nu'\|_2=0$ in (\ref{er-bd}).  From (\ref{er-bd}), we see  that the smaller $\varrho$  is, the faster the convergence speed of the algorithm would be. By simply taking $ \beta=0$, we immediately see that  $\varrho=\tilde{\varrho}$ which is decreasing  with respect to $g(\gamma)$. Thus in theory, one can choose generalized mean functions such that the constant $ \rho $ in error bound  is as small as possible so that the algorithm can converge as quickly as possible.}
\end{Rem}

\begin{Rem}\label{rem-coef-noise}\emph{
From \eqref{C12}  and \eqref{rho-tilde} and Definition \ref{def-RIC},  the constants in \eqref{rho-C} including $\delta_k,~\delta_{2k},~C_2$ and  $\tilde{\varrho}$  only depend on either the matrix $A$ or the parameter $\gamma$ together with the general mean function being used. These constants are independent of the noise level $ \|\nu'\|_2.$  From \eqref{rho-C},  $\varrho$ and $C_\beta$  are strictly increasing  with respect to $\beta$ for given $\gamma\in (0,1].$ This indicates that the coefficient $\frac{C_\beta}{1-\varrho}$ in \eqref{er-bd} is strictly increasing  with respect to $\beta$. To control this coefficient, we may use a relatively small $ \beta$ when the noise level is relatively high; otherwise, we may use a relatively large $\beta$ when the noise level is low.}
\end{Rem}

\begin{Rem}\label{rem-RIC-ND-matrix}\emph{
From \cite[Proposition 6.2]{FR13}, we see that the ($3k$)-th order RIC of the matrix $A$ satisfies $\delta_{3k}\leq (3k-1)\mu$  if $A$ has $\ell_2$-normalized columns, where $\mu$ is the coherence of $A$. Moreover,  the coherence of the normalized gaussian matrix $A$ satisfies $$\mu\leq\frac{\sqrt{15\log n}}{\sqrt{m}-\sqrt{12\log n}}$$ with probability exceeding $1-11/n$ if $60\log n\leq m \leq (n-1)/(4\log n)$   (see, e.g., \cite[Theorem 2]{MBC11}). Thus the ($3k$)-th order RIC of the normalized gaussian matrix $A$ satisfies $\delta_{3k}<\delta(\gamma)$ in \eqref{beta-range} with probability exceeding $1-11/n$ provided that $m_{min}< m \leq (n-1)/(4\log n),$ where $$m_{min}=\max\left\{20,\left(\frac{\sqrt{5}(3k-1)}{\delta(\gamma)}+2\right)^2\right\}\cdot 3\log n.$$
Moreover, the inequality $m_{min}< (n-1)/(4\log n)$ is ensured for given $\delta(\gamma)\in (0,1)$ provided that  $k\ll n$ and $n$ is large enough. \textbf{Based on such observation, we choose to use the gaussian random matrix as the measurement matrix to assess the numerical performance of the algorithm in Section \ref{Num-exp}. }}
\end{Rem}


 In \cite{MZKS22}, the PGROTP algorithm was  analyzed in the case $\bar{q}\geq 2k.$ The convergence of the algorithm for the case $k\leq \bar{q}< 2k$ was not yet obtained. As a byproduct of the analysis of DTAM in this paper, we can also establish an error bound for PGROTP with $\bar{q}=k.$ This result for PGROTP can be seen as a special case to the main result above.

\begin{Cor}\label{Cor-PG}
Let  $\delta^*$ be the unique root to the univariate function on (0,1)
\begin{align}\label{PG-Gt}
\hat{G}(t):=\frac{\sqrt{2}t}{1-t}\left[1+\frac{1}{\sqrt{1+t}}\right]-1
\end{align}
which is continuous and strictly increasing in  $[0,1)$. Let $x \in \mathbb{R}^n$ be the solution to the system $y = Ax+\nu$ with a noise vector $\nu$.
 If the RIC of the matrix A satisfies $\delta_{3k}<\delta^*\approx 0.272$, then
 the sequence $\{x^{p}\}$ generated by PGROTP with $\bar{q}=k$ obeys
\begin{align}\label{PG-er-bd}
\|x^{p}-x_S\|_2\leq \hat{\varrho}^p\|x^{0}-x_S\|_2+\frac{\hat{C}}{1-\hat{\varrho}}\|\nu'\|_2,
\end{align}
where $\nu'=Ax_{\overline{S}}+\nu$, $S=\mathcal{L}_k(x)$  and
\begin{align*}
\hat{\varrho}&:=\frac{\sqrt{2}\delta_{3k}}{1-\delta_{2k}}\cdot\frac{1+\sqrt{1+\delta_ k}}{\sqrt{1+\delta_{2k}}}<1,\nonumber\\
\hat{C}&:=\frac{1}{1-\delta_{2k}}\left(\sqrt{2}+\frac{2}{\sqrt{1+\delta_{2k}}}+(\sqrt{2}+1)\sqrt{1+\delta_k}\right).
\end{align*}
\end{Cor}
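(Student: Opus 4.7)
The plan is to derive Corollary \ref{Cor-PG} by specializing the DTAM analysis to the setting where $\beta=0$ and the S1 index selection always picks $q=k$ (so that $\Omega_q=\Omega_k$); this setting coincides with PGROTP for $\bar{q}=k$. Under this specialization, the term $(r^p)_{\Omega_k\setminus\Omega_q}$ in the proof of Lemma \ref{bound_up-x} vanishes, which corresponds to replacing $g(\gamma)$ by $1$ in all downstream inequalities. In particular, the constants in Lemma \ref{bound_up-x} collapse to $C_1=\sqrt{2}\delta_{3k}$ and $C_2=\sqrt{2}\sqrt{1+\delta_{2k}}$, and the bound simplifies to $\|(u^p-x_S)_S\|_2 \le \sqrt{2}\delta_{3k}\|x^p-x_S\|_2+\sqrt{2}\sqrt{1+\delta_{2k}}\|\nu'\|_2,$ since $Q_p=\|x^p-x_S\|_2$ and $\beta Q_{p-1}=0$.

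Next I would repeat Part II of the proof of Theorem \ref{main-thm} (Case $|V^p|>k$), but exploit the fact that PGROTP's subproblem \eqref{algorithm-ROTP-2} optimizes $w$ over all of $\mathbb{R}^n$ rather than over $V^p$. This freedom lets me choose a comparison vector $\hat w$ supported in $\mathrm{supp}(u^p)\cap S$ (with the remaining mass distributed inside $\mathrm{supp}(u^p)$ to satisfy $\sum \hat w_i=k$), which in turn allows me to bypass the cross term $\|(x_S-u^p)_{V^p\setminus(S\cup S^{p+1})}\|_2$ that produced the factor $\sqrt{(5+\delta_{2k})/(1+\delta_{2k})}$ in the DTAM bound. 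The resulting clean estimate is of the form $\|x_S-\mathcal{H}_k(u^p\circ w^p)\|_2 \le \sqrt{2}\,\|(u^p-x_S)_S\|_2 + \tfrac{\sqrt{2}}{\sqrt{1-\delta_{2k}}}\|\nu'\|_2.$ Feeding this into Lemma \ref{lem-pursuit} (which contributes the $\sqrt{1+\delta_k}/(1-\delta_{2k})\,\|\nu'\|_2$ term) together with the specialized Lemma \ref{bound_up-x} bound yields the one-step recurrence $\|x^{p+1}-x_S\|_2\le \hat\varrho\,\|x^p-x_S\|_2+\hat C\,\|\nu'\|_2$, with $\hat\varrho$ and $\hat C$ exactly as stated. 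Iterating this first-order linear recurrence in $p$ produces the geometric decay \eqref{PG-er-bd}.

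It remains to show that $\hat\varrho<1$ is equivalent to $\delta_{3k}<\delta^*$. Using $\delta_k\le \delta_{2k}\le\delta_{3k}$, we have $\hat\varrho \le \hat G(\delta_{3k})+1$; moreover, the arguments in the proof of Lemma \ref{property-Gt} — monotonicity of $t/(1-t)$ and of $1/((1-t)\sqrt{1+t})$ on $[0,1)$ — apply verbatim to show that $\hat G$ is strictly increasing on $[0,1)$. Since $\hat G(0)=-1<0$ and $\hat G(t)\to+\infty$ as $t\to 1^-$, $\hat G$ has a unique root $\delta^*\in(0,1)$, and a direct numerical solution of $\hat G(t)=0$ yields $\delta^*\approx 0.272$. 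Thus $\delta_{3k}<\delta^*$ implies $\hat\varrho<1$, so the geometric decay is meaningful.

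The main obstacle will be the second step, namely obtaining the cleaner $\sqrt{2}$-factor estimate that produces $\hat\varrho$ rather than the $\varrho$ from Theorem \ref{main-thm}. This requires a careful choice of $\hat w$ tailored to PGROTP's unrestricted feasible set; the proof in Theorem \ref{main-thm} had to constrain $\hat w$ to $V^p$ (because DTAM's subproblem \eqref{alg-DSROTP-2} does so), and that constraint is precisely what forced the extra term involving $V^p\setminus(S\cup S^{p+1})$ and the resulting $\sqrt{(5+\delta_{2k})/(1+\delta_{2k})}$ factor. Once this sharper bound is in hand, the rest of the argument is a routine specialization of the DTAM analysis.
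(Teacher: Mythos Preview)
Your high-level plan is right and matches the paper: specialize the DTAM analysis to $\beta=0$, $q=k$ (so $g(\gamma)\to 1$), and exploit that PGROTP's subproblem \eqref{algorithm-ROTP-2} has the larger feasible set $\{w:\sum_i w_i=k,\,0\le w\le\mathbf{e}\}$ unconstrained by $V^p$. The monotonicity argument for $\hat G$ and the final induction are also fine.

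The gap is in your ``clean estimate''
\[
\|x_S-\mathcal{H}_k(u^p\circ w^p)\|_2 \le \sqrt{2}\,\|(u^p-x_S)_S\|_2 + \tfrac{\sqrt{2}}{\sqrt{1-\delta_{2k}}}\|\nu'\|_2.
\]
This is both unproven and inconsistent with the target constants. The term $\|(x_S-u^p)_{V^p\setminus S}\|_2$ cannot be fully bypassed: the hard-thresholding lemma (Lemma \ref{lem-hard-thres}) and the RIP splitting leading to \eqref{xuw-bd1} intrinsically produce contributions on $S^{p+1}\setminus S$ and $V^p\setminus(S\cup S^{p+1})$, regardless of how you pick the comparison vector. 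If you plug your estimate together with the specialized Lemma \ref{bound_up-x} into \eqref{xp1-Hkup}, the resulting contraction factor is $2\delta_{3k}/\sqrt{1-\delta_{2k}^2}$, which is \emph{not} $\hat\varrho$; and your proposed $\hat w$ (supported in $\textrm{supp}(u^p)\cap S$ with ``remaining mass'' inside $\textrm{supp}(u^p)$) still leaves $u^p\circ\hat w$ with support outside $S$, so it does not kill the cross terms anyway.

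What the paper actually does is more modest but precise: choose $\bar w\in\{0,1\}^n$ with $\textrm{supp}(\bar w)=S$ (feasible for \eqref{algorithm-ROTP-2}, generally infeasible for \eqref{alg-DSROTP-2}). Then $x_S-u^p\circ\bar w=(x_S-u^p)_S$ is $k$-sparse, so the comparison bound sharpens to $\|y-A(u^p\circ w^p)\|_2\le\sqrt{1+\delta_k}\,\|(x_S-u^p)_S\|_2+\|\nu'\|_2$, i.e.\ the extra $\|(x_S-u^p)_{S^{p+1}\setminus S}\|_2$ term from \eqref{yAuwp-bd-1} disappears \emph{there}. Feeding this into \eqref{x-Hk-uw1} and applying \eqref{inq_Delta} with $a=\sqrt{(1+\delta_{2k})/(1-\delta_{2k})}$, $b=1$ gives
\[
\|x_S-\mathcal{H}_k(u^p\circ w^p)\|_2\le\sqrt{\tfrac{2}{1-\delta_{2k}}}\,\|(x_S-u^p)_{V^p\setminus S}\|_2+\sqrt{\tfrac{1+\delta_k}{1-\delta_{2k}}}\,\|(x_S-u^p)_S\|_2+\tfrac{2}{\sqrt{1-\delta_{2k}}}\|\nu'\|_2,
\]
so the $V^p\setminus S$ coefficient drops from $\sqrt{(5+\delta_{2k})/(1-\delta_{2k})}$ to $\sqrt{2/(1-\delta_{2k})}$ but is \emph{not} eliminated. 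You then need \emph{both} the specialized Lemma \ref{bound_up-x} bound on $\|(u^p-x_S)_S\|_2$ and the specialized \eqref{up-x-CS} bound $\|(u^p-x_S)_{V^p\setminus S}\|_2\le\delta_{3k}\|x^p-x_S\|_2+\sqrt{1+\delta_{2k}}\,\|\nu'\|_2$; substituting both and passing through \eqref{xp1-Hkup} yields exactly $\hat\varrho$ and $\hat C$.
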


\begin{proof}
Comparing PGROTP with DTAM, we see the following:  (i) The first step of PGROTP with $\bar{q}=k$ is identical to that of DTAM with $\beta=0$ and $q=k.$ Thus for PGROTP, one has  $\|(r^p)_{\Omega_k\setminus \Omega_q}\|_2=0$ in \eqref{up-x-bd5} and $g(\gamma)$ is replaced by 1   in Lemma \ref{lem-rp_Omg_qk}.  (ii) The subproblems \eqref{alg-DSROTP-2}  and \eqref{algorithm-ROTP-2} possess  a common objective function and a constraint  $0\leq w \leq  {\bf e}. $ (iii) The third steps of both algorithms are the identical. Therefore, Lemma \ref{bound_up-x} and  the relations \eqref{er-bd-projection}-\eqref{x-Hk-uw1}, \eqref{inq_Delta} and \eqref{up-x-CS-0}-\eqref{up-x-CS}  in the proof of Theorem \ref{main-thm}  remains valid for PGROTP by simply setting $\beta=0,$  $V^p=\textrm{supp}(x^p)\cup \Omega_k$ and $g(\gamma)= 1$ in previous analysis.

Similar to \eqref{yAuwp-bd}-\eqref{xup_omhat}, we choose a $k$-sparse  vector $\bar{w}\in\{0,1\}^n$ from the feasible set of  \eqref{algorithm-ROTP-2}  such that $\textrm{supp}(\bar{w})=S$, which leads to
\begin{align}\label{PG-xuw}
\|x_S-u^p \circ \bar{w}\|_2=\|x_S-(u^p)_{\textrm{supp}(\bar{w})}\|_2=\|(x_S-u^p)_S\|_2.
\end{align}
 It should be noted that the term $\|(x_S-u^p)_{S^{p+1}\setminus S}\|_2$ is vanished in \eqref{PG-xuw} compared to \eqref{xup_omhat}, due to the choice of  $\bar{w}$ and $\hat{w}$  in the corresponding feasible sets.
 Similar to \eqref{yAuwp-bd}, by using $y=Ax_S+\nu'$ and \eqref{PG-xuw}, we have
\begin{align}\label{PG-yAuwp-bd}
\|y-A(u^p \circ w^p)\|_2\leq&\|y-A(u^p \circ \bar{w})\|_2\nonumber\\
\leq&\|A(x_S-u^p \circ \bar{w})\|_2+\|\nu'\|_2\nonumber\\
\leq&\sqrt{1+\delta_{k}}\|x_S-u^p \circ \bar{w}\|_2+\|\nu'\|_2\nonumber\\
=&\sqrt{1+\delta_{k}}\|(x_S-u^p)_{S}\|_2+\|\nu'\|_2,
\end{align}
where the first inequality is due to $w^p$ being the optimal solution of   \eqref{algorithm-ROTP-2}, and  the  third inequality is ensured by \eqref{def-RIC-1} with the vector $x_S-u^p \circ \bar{w}$ being  $k$-sparse.
Combining \eqref{x-Hk-uw1} with \eqref{PG-yAuwp-bd}, one has
\begin{align}\label{PG-x-Hk-uw}
 \|x_S  &  -\mathcal{H}_k(u^p \circ w^p)\|_2\nonumber\\
&\leq\sqrt{\frac{1+\delta_{2k}}{1-\delta_{2k}}}\|(x_S-u^p)_{V^p\setminus (S\cup S^{p+1})}\|_2+
\|(x_S-u^p)_{ S^{p+1}\setminus S}\|_2\nonumber\\
&~~~ +\sqrt{\frac{1+\delta_{k}}{1-\delta_{2k}}}\|(x_S-u^p)_S\|_2+\frac{2}{\sqrt{1-\delta_{2k}}}\|\nu'\|_2\nonumber\\
&\leq\sqrt{\frac{2}{1-\delta_{2k}}}\|(x_S-u^p)_{V^p\setminus S}\|_2+\sqrt{\frac{1+\delta_{k}}{1-\delta_{2k}}}\|(x_S-u^p)_S\|_2  \nonumber\\
 & ~~+\frac{2}{\sqrt{1-\delta_{2k}}}\|\nu'\|_2,
\end{align}
where the last inequality is from  \eqref{inq_Delta} with $a=\sqrt{\frac{1+\delta_{2k}}{1-\delta_{2k}}}$ and $b=1$.
Setting $\beta=0$ and replacing $g(\gamma)$ by 1 in Lemma \ref{bound_up-x} and  \eqref{up-x-CS}, we obtain
\begin{align}\label{PG-bound-up-xS}
\|(u^p-x_S)_S\|_2 \leq \sqrt{2}\delta_{3k} \|x^p-x_S\|_2+\sqrt{2(1+\delta_{2k})}\|\nu'\|_2
\end{align}
and
\begin{align}\label{PG-up-x-CS}
\|(u^p-x_S)_{V^p\setminus S}\|_2
\leq \delta_{3k} \|x^p-x_S\|_2+\sqrt{1+\delta_{2k}}\|\nu'\|_2.
\end{align}
Substituting \eqref{PG-up-x-CS}  and  \eqref{PG-bound-up-xS} into \eqref{PG-x-Hk-uw} yields
\begin{align*}
\|x_S-\mathcal{H}_k(u^p \circ w^p)\|_2
\leq&\frac{\sqrt{2}\delta_{3k}}{\sqrt{1-\delta_{2k}}}(1+\sqrt{1+\delta_{k}})\|x_S-x^p)\|_2\nonumber\\
&+\sqrt{2}\sqrt{\frac{1+\delta_{2k}}{1-\delta_{2k}}}\left(1+\sqrt{1+\delta_{k}}+\sqrt{\frac{2}{1+\delta_{2k}}}\right)\|\nu'\|_2.
\end{align*}
It follows from \eqref{xp1-Hkup} that
\begin{align*}
\|x^{p+1}-x_S\|_2\leq\hat{\varrho}\|x^{p}-x_S\|_2+\hat{C}\|\nu'\|_2,
\end{align*} which is the estimation in \eqref{PG-er-bd}. The constants $\hat{\varrho}$ and $ \hat{C}$ are exactly the ones stated in the Corollary.  It is sufficient to show that $\hat{\varrho}<1$.
Note that the function in \eqref{Gt-part-inc} is strictly increasing in  $[0,1)$, it is easy to verify that the function  $\hat{G}(t)$ given in \eqref{PG-Gt} is also strictly increasing in  $[0,1)$. Also, we see that $\hat{G}(t)$ is continuous over $[0,1)$, $\hat{G}(0)=-1<0$ and $\lim_{t\rightarrow 1^-}\hat{G}(t)=+\infty$. Thus,  $\hat{G}(t)=0$ has a unique real root $\delta^*$ in   $(0,1)$. By noting that  $\delta_{k}\leq\delta_{2k}\leq\delta_{3k}<\delta^*$ and  \eqref{parti324},   we deduce that
$\hat{\varrho}\leq\hat{G}(\delta_{3k})+1<\hat{G}(\delta^*)+1=1.$
\end{proof}

\begin{Rem}\label{gmf-l2} \emph{While the main result in this paper is shown by considering the generalized mean function (\ref{G-M-fun}) satisfying the conditions of Lemma \ref{lem-GM-fun}, the error bound of the algorithm can be established with more general functions than those described by Lemma \ref{lem-GM-fun}. In fact, the inequality \eqref{rp_Omg_qk} is key to the establishment of Theorem \ref{main-thm}. While \eqref{rp_Omg_qk} is shown under the condition of Lemma \ref{lem-GM-fun}, we can verify that some other functions may also ensure the inequality \eqref{rp_Omg_qk}.
For instance, let us consider  the norm  $f(z)=\|z\|_\ell ~(\ell>1),$  where $z\in [0,1]^k$, which can be also viewed as a generalized mean function  $\Gamma_{\theta}(z)$ with $\theta=(1,\ldots,1)^T\in\mathbb{R}_{++}^{k}$ and $\Psi (t)=\phi_i (t) = t^\ell$ for $i=1,..., k.$ Since Hessian matrix $\nabla^2f(z)$ is discontinuous at $0$, so this function does not satisfy the conditions of Lemma \ref{lem-GM-fun} and thus the proof of Lemma \ref{lem-rp_Omg_qk} is not suitable for this function. However, for this case, $f\left(|r^p_{(q,k)}|/\|r^p_{(k,k)}\|_2\right)\geq\gamma f\left(|r^p_{(k,k)}|/\|r^p_{(k,k)}\|_2\right)$ is reduced to $\|r^p_{(q,k)}\|_\ell\geq\gamma \|r^p_{(k,k)}\|_\ell.$   Note that the  norms in $\mathbb{R}^k$ are equivalent in the sense that there exist two positive constants $c_2\geq c_1>0$ such that $c_1\|z\|_2\leq \|z\|_\ell\leq c_2\|z\|_2. $ This implies that \eqref{rp_Omg_qk} also holds for $g(\gamma)=\gamma\frac{c_1}{c_2}$. In particular, $g(\gamma)=\gamma$ when $f(z)=\|z\|_2. $  }
\end{Rem}

\section{Numerical experiments}\label{Num-exp}

In this section, we compare the numerical performances of six algorithms including  DTAM, PGROTP, NTP, StOMP, SP and OMP on solving several types of linear inverse problems including the recovery of synthetic sparse signal, reconstruction of natural audio signals as well as color image denoising. All experiments are performed on a PC with the processor Intel(R) Core(TM) i7-10700 CPU@ 2.90 GHz and 16 GB memory. The CVX \cite{GB17} with solver \emph{`Mosek'} \cite{AA20} was used to solve convex optimization subproblems involved in DTAM and PGROTP. We take \eqref{recov-criter} as  the stopping criterion  in Section \ref{sec-syn} in noiseless situations, and take $\|x^p-x^{p-1}\|_2/\|x^{p}\|_2\leq 10^{-3}$ in Sections \ref{sec-signal} and \ref{sec-image} in noisy settings.  The maximum numbers of iterations of DTAM, PGROTP, NTP and SP were set to be 50, 50, 150, 150, respectively, while OMP by its nature is performed exactly $k$ iterations. The generalized mean function $\Gamma_\theta(z)$ in DTAM is given by \eqref{Exam1-G-M} with $\sigma=1$ and $\theta=(1,\ldots,1)^T\in\mathbb{R}_{++}^{k}$ and the parameters $\gamma, \beta$ are set as $\gamma=0.1$ and $ \beta=0.4$, and unless otherwise specified, these parameters remained unchanged  throughout the experiments. The parameters $(\alpha, \lambda)$  in NTP are set as in \cite{ZL22}, i.e., $\alpha=5$ and $\lambda= 1. $ The number of stages of StOMP is set to be 50, and its threshold parameter $t_s$  is determined by the CFAR threshold selection rule \cite{DTDS12}.

\subsection{Experiments with synthetic data}\label{sec-syn}
We consider the recovery of a sparse vector $x^*$ from accurate measurements $y=Ax^*$ with $A=\hat{A}\cdot \textrm{diag}(1/\|\hat{A}_1\|_2,\ldots,1/\|\hat{A}_n\|_2)$, where  $x^*\in\mathbb{R}^{ n}$ and $\hat{A}\in\mathbb{R}^{m\times n}$ are randomly generated with $n=4000$ and $m=0.2n$, and $\hat{A}_i$'s are the columns of matrix $\hat{A}$.    Moreover, the nonzeros of $x^*$ and entries of $\hat{A}$ are  standard Gaussian random variables, and the position of nonzeros of $x^*$ follows the uniform distribution. We first compare the success frequencies and average runtime of these algorithms for solving 100 random examples of  $(A,x^*)$ for every given sparsity level $k,$ where $ k= 5+5j, j=1, ..., 71.$  In our experiments, the recovery is counted as `success' if the solution $x^p$ generated by an algorithm satisfies the criterion
\begin{equation}\label{recov-criter}
\|x^p-x^*\|_2/\|x^*\|_2\leq 10^{-3}.
\end{equation}
The experiment results are summarized in Fig. \ref{syn-data}. The first figure on the left indicates that the DTAM can achieve the success frequency (i.e., the ratio of the number of successes and the number of random examples) comparable to several existing methods and may outperform these existing methods on many examples.
  \begin{figure}[htbp]
    \begin{indented}
 \item[]
   \subfigure[Success frequency comparison]{
\begin{minipage}[t]{0.45\linewidth}
  \includegraphics[width=1.05\textwidth,height=0.95\textwidth]{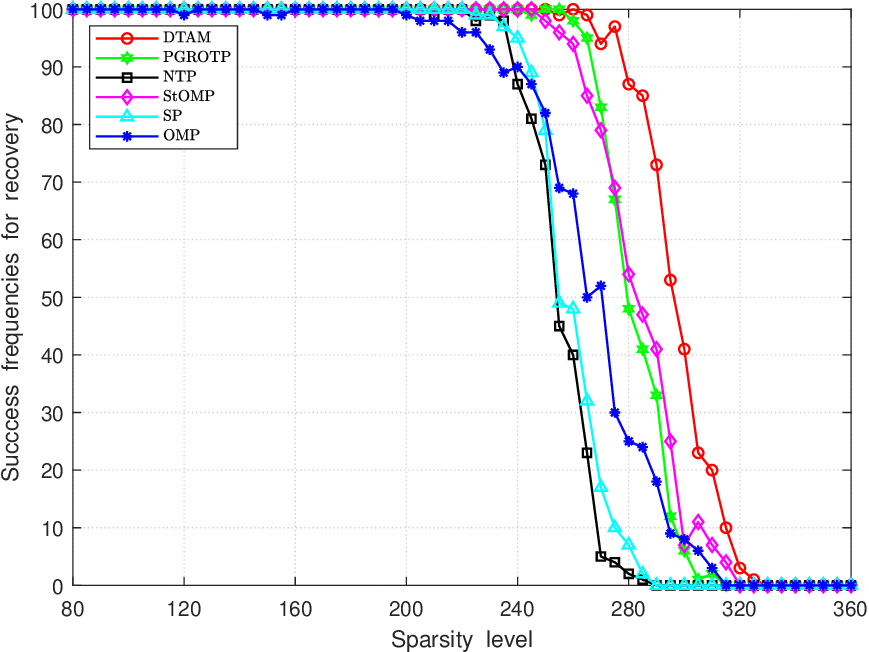}
   \end{minipage}
   }\hspace{0.3cm}
      \subfigure[Runtime]{
\begin{minipage}[t]{0.45\linewidth}
  \includegraphics[width=1.05\textwidth,height=0.95\textwidth]{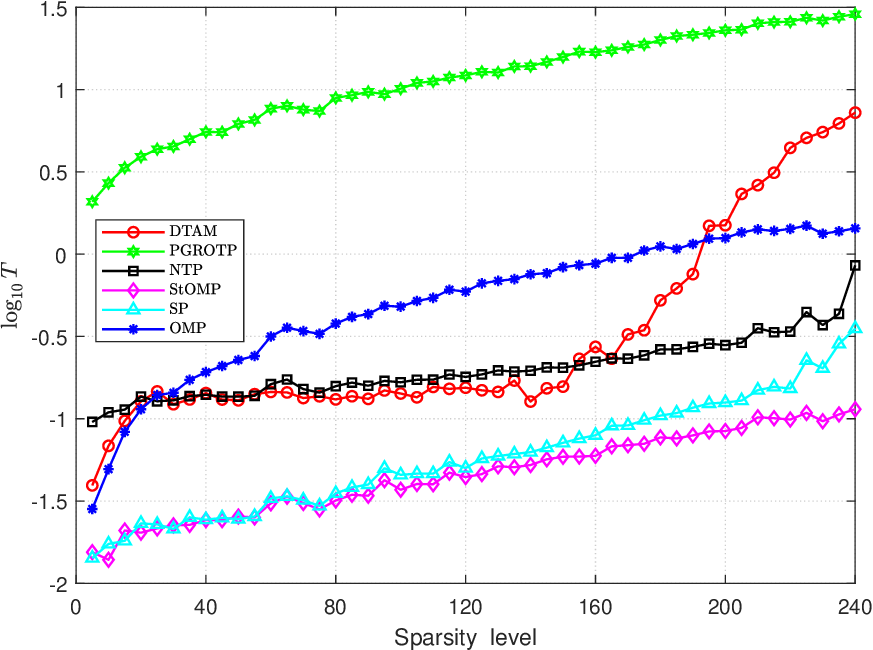}
   \end{minipage}
   }
    \end{indented}
   \caption{Comparison of  success frequencies and runtime on synthetic data, and $T$ is the average CPU time (in seconds) for recovery.}\label{syn-data}
   \end{figure}
In Fig. \ref{syn-data}(b), we use $T$ to denote the average CPU time required for these algorithms to recover sparse vectors. Clearly, DTAM works much faster than PGROTP (since DTAM solves the subproblem \eqref{alg-DSROTP-2} in a lower dimensional subspace, whose dimension is at most $2k$), while it  is slower than StOMP and SP.   Also, DTAM consumes less time than  NTP and OMP
for relatively small $k$, while it takes more time than  NTP for $k\geq m/5$ and OMP  for $k\geq m/4$.

\subsection{Reconstruction of audio signal }\label{sec-signal}

The first row in Fig. \ref{fig-signal} is   an audio signal $d\in\mathbb{R}^{ n}$  with  $n=16384$, which is the sound of an unknown Bird sampled at 48 kHz. We aim to reconstruct the bird signal from the accurate measurements $y=Bd$ with $B\in\mathbb{R}^{m\times n}$ being a normalized  Gaussian matrix given as in Section \ref{sec-syn}, wherein $m=\lceil \kappa \cdot n\rceil$ and $\kappa$ is the sampling rate.
Generating  a discrete wavelet matrix $\Phi\in\mathbb{R}^{n\times n}$ from the DWT with nine levels of  \emph{`sym16'} wavelet, the audio signal $d$ can be sparsely represented as $d=\Phi^T x$, where the wavelet coefficient vector $x\in\mathbb{R}^{ n}$ is $k$-compressible. Thus the reconstruction of $d$  from $y=Bd$ is transformed to  the recovery of a $k$-sparse vector $\hat{x}\in\mathbb{R}^{ n}$ from $y=A x$ with $A=B\Phi^T$ by using the model \eqref{model}, where $\hat{x}$ is  the best $k$-term approximation of $x$  and the sparsity level is set as   $k=\lceil 0.3m\rceil$. Once $\hat{x}$ is recovered by the algorithm, the reconstructed signal $\hat{d}\in\mathbb{R}^{ n}$ can be obtained by  $\hat{d}=\Phi^T \hat{x}$ immediately. The quality of reconstruction is evaluated by the SNR, defined as follows:
\begin{align*}
SNR:=20\cdot log_{10}(\|d\|_2/\|d-\hat{d}\|_2).
\end{align*}

\begin{figure}[htbp]
\begin{center}
   \begin{indented}
 \item[]
\begin{minipage}[t]{\linewidth}
  \includegraphics[width=\textwidth,height=0.7\textwidth]{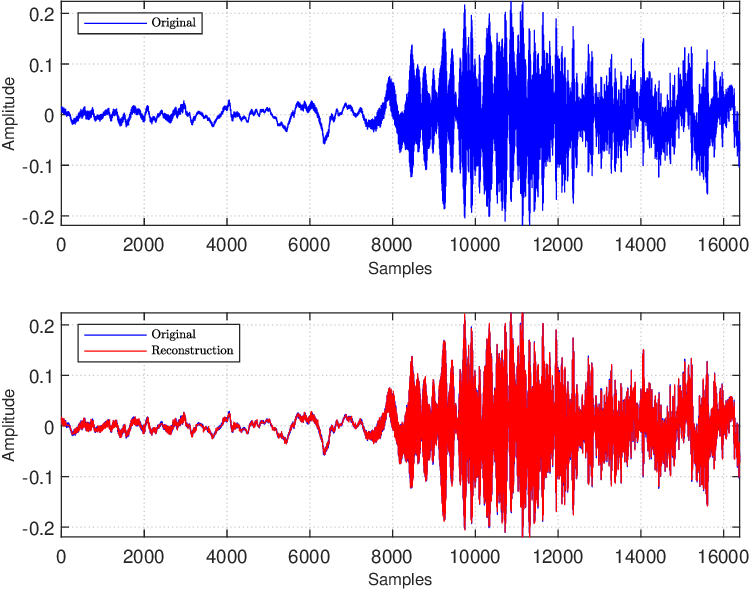}
   \end{minipage}
    \end{indented}
   \caption{Reconstruction of an  audio signal by DTAM with $\kappa=0.5;$ The first row is the original signal, and the second one demonstrates both the original signal ({\it blue}) and the reconstructed one by DTAM ({\it red}).}\label{fig-signal}
   \end{center}
   \end{figure}

The DTAM can successfully reconstruct the audio signal. This can be seen from Fig. \ref{fig-signal} that  the reconstructed signal (red) by DTAM with $\kappa=0.5$ is clearly matching with the original signal (blue). The performances of the algorithms with different sampling rates $\kappa=0.35,0.4,0.5$ are summarized in Table \ref{table-SNR-Time}. The second row of the table indicates that the SNRs of DTAM are almost the same as that of SP, and they are always larger than other four algorithms for each given $\kappa$. For instance, the SNR of DTAM exceeds that of StOMP by 0.76 dB  as $\kappa=0.5$, and by 2.3 dB  as $\kappa=0.35$. This means DTAM performs better on audio signal reconstruction than several algorithms except SP for small $\kappa$. The third row of Table \ref{table-SNR-Time} reveals that DTAM consumes less time for solving the problems than PGROTP and OMP for these given values of  $\kappa$, while it spends more time than other three.
   \begin{table}[h]
 \caption{Comparison of the SNR (dB) and CPU time (in seconds) for   algorithms with different sampling rates $\kappa$.}\label{table-SNR-Time}
 \vspace{0.2cm}
 \begin{indented}
 \item[]\begin{tabular}{cccccccc}
\hline
& $\kappa$ & DTAM &PGROTP&NTP&StOMP&SP&OMP\\
\hline
 \multirow{3}*{SNR}
 &0.35&	21.47 	&20.71 	&20.09 	&19.17 	&21.67 	&21.04 	\\
&0.4&	23.48 	&23.30 	&22.85 	&21.85 	&23.67 	&23.08 	\\
&0.5&	26.33 	&25.61 	&25.39 	&25.57 	&26.35 	&25.66 	 \\
\hline
  \multirow{3}*{Time}
&0.35	&516 	&1770 	&81 	&108 	&203 	&1044 	 \\
&0.4	&765 	&1819 	&92 	&91 	&196 	&1541 	 \\
&0.5	&1421 	&2897 	&103 	&138 	&826 	&3069 	\\
\hline
 \end{tabular}
 \end{indented}
 \end{table}

\subsection{Image denoising}\label{sec-image}
   We now demonstrate the performance of DTAM on color image denoising. Fig. \ref{denoising} (a) is the original image \emph{ShiGanLi} of size  $n\times n\times 3$ with $n=1024$, which is an ancient cooking vessel.  The Fig. \ref{denoising} (b) is the noised image obtained by adding  Salt and Pepper noise  with noise density 0.08  to the original image in Fig. \ref{denoising} (a), wherein Salt noise is added to the rows ranging from 1  to $\lfloor 0.8n\rfloor$ of the original  image while Pepper noise is added to the remaining rows.
    \begin{figure}[htbp]
   \begin{indented}
 \item[]\centering
      \subfigure[Original image]{
\begin{minipage}[t]{0.45\linewidth}
  \includegraphics[interpolate=true,width=\textwidth,height=\textwidth]{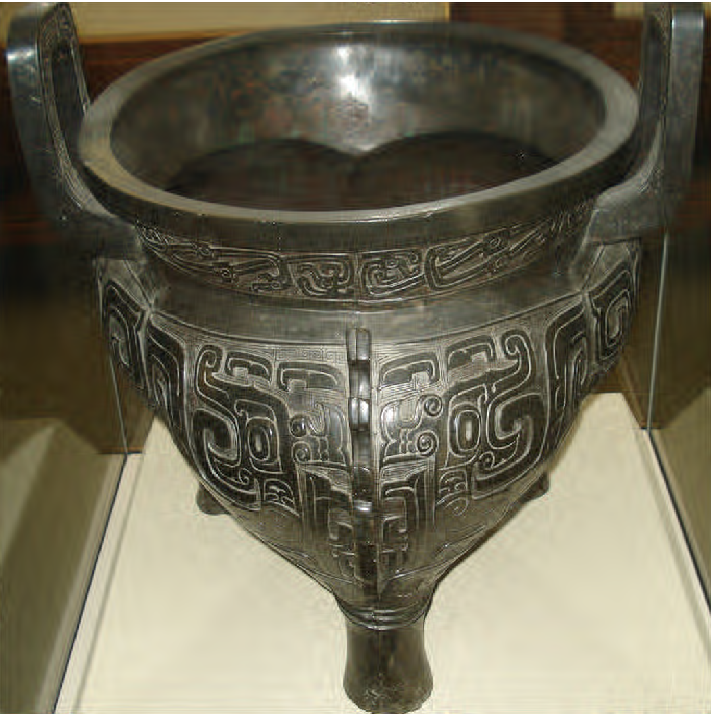}
   \end{minipage}
   }\\
   \subfigure[Noisy image]{
\begin{minipage}[t]{0.45\linewidth}
  \includegraphics[interpolate=true,width=\textwidth,height=\textwidth]{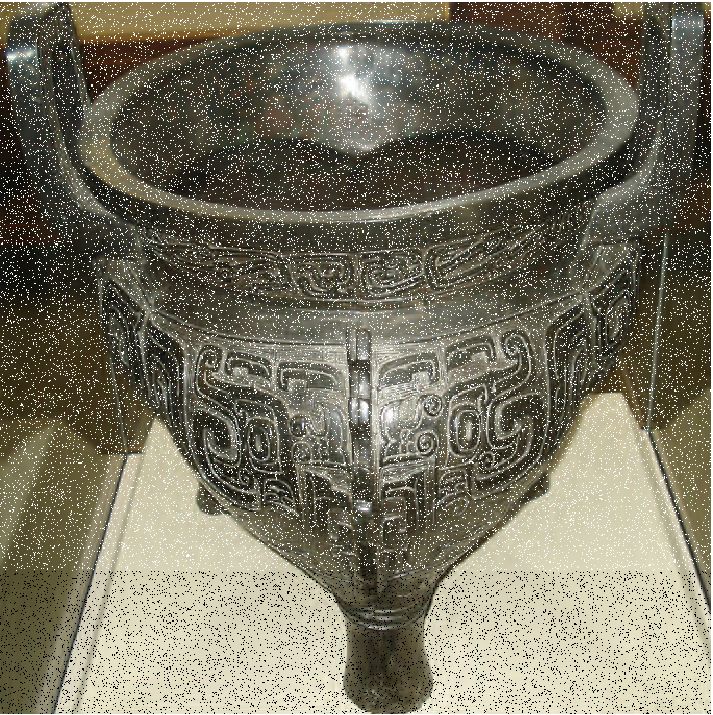}
   \end{minipage}
   }
   \subfigure[Denoised by DTAM]{
\begin{minipage}[t]{0.45\linewidth}
  \includegraphics[interpolate=true,width=\textwidth,height=\textwidth]{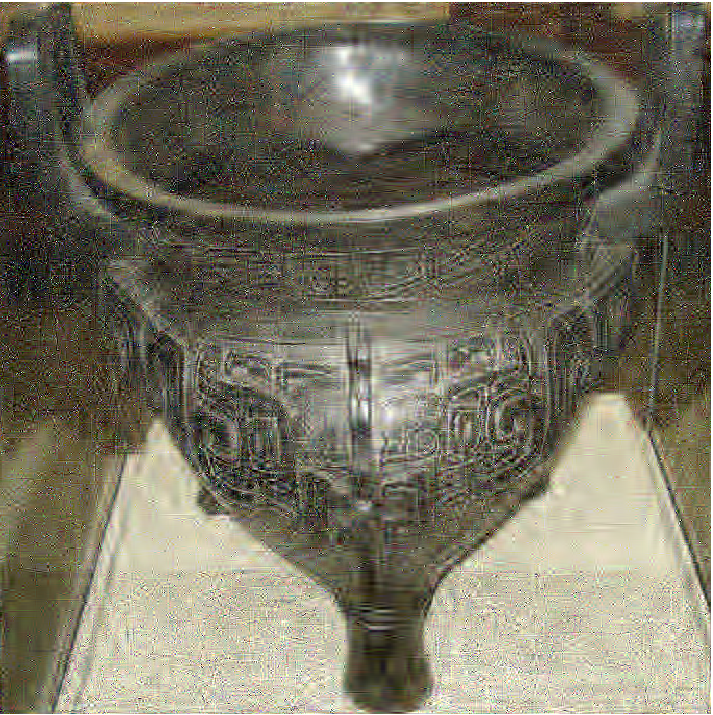}
   \end{minipage}
   }
   \end{indented}
   \caption{Performance of DTAM on image denoising.}\label{denoising}
   \end{figure}
For a given channel of the noisy image, the main steps for image denoising are as follows: First, perform a sparse representation of noisy image  via the DWT with five levels of  \emph{`sym16'} wavelet. Its coefficient matrix, denoted by $\tilde{X}$,  is compressible. Then, consider  the  accurate measurements $Z=A\tilde{X}$ of the coefficient matrix $\tilde{X}$, where $A$ is an $m\times n$ normalized Gaussian matrix given as in Section \ref{sec-syn} with  $m=\lceil 0.6 n\rceil.$  Finally, recover the  coefficient matrix  $X$ of the original image by an algorithm from the data $(A,Z)$, in which $Z$ can be seen as the inaccurate measurements of $X. $ After this, reconstruct the original image by using the inverse DWT.  Note that the above steps  need to   perform  three times due to three channels of the color image. The sparsity level is taken as $k=\lceil m/4\rceil$ for all algorithms.
The value of the parameter $\gamma$ in DTAM  is changed to 0.4, and other parameters remain unchanged.  Fig. \ref{denoising} (c) shows that DTAM can efficiently remove the noise and restore the image quality. We also use the following PSNR to evaluate the quality of denoised image:
\begin{align*}
PSNR:=20\cdot log_{10}(255/\sqrt{MSE}),
\end{align*} where $MSE$ is the mean-squared error  between the denoised image and  the original one. We only record the PSNR on the Y channel in YCbCr color space.  The PSNRs for several algorithms are displayed in Table \ref{table-PSNR}, from which we observe that the PSNRs  of these algorithms are close to each other while PGROTP is slightly better than other algorithms. These algorithms bring at least 5.1 dB improvement in PSNR compared to the noisy image.
 \begin{table}[hbtp]
 \caption{Comparison of  the PSNR (dB) for  algorithms.}\label{table-PSNR}
  \vspace{0.2cm}
 \begin{indented}
 \item[]
 \begin{tabular}{cccccccc}
\hline
&DTAM &PGROTP&NTP&StOMP&SP&OMP&Noisy image\\
\hline
PSNR&20.81 	&20.96 	&20.89 &	20.89 	&20.93 	&20.53& 	15.41\\
 \hline
 \end{tabular}
  \end{indented}
 \end{table}

\section{Conclusions}\label{conclusion}
In this paper, the algorithm DTAM is proposed for sparse linear inverse problems through merging a few algorithmic development techniques such as the sparse search direction, dynamic index selection and dimensionality reduction.  The computational complexity of DTAM is lower than that of ROTP-type algorithms. A unique feature of DTAM is that it employs a generalized mean function to facilitate a dynamic choice of the vector bases to construct the solution of linear inverse problems, and that the search direction in the algorithm is a linear combination of the negative gradients of error metric at the iterates produced by the algorithm. The error bound of DTAM has been established under suitable assumptions. Moreover, the error bound for the existing PGROTP method has also derived for the case $\bar{q}=k$ for the first time. Numerical experiments show that DTAM can compete with several existing algorithms, including PGROTP, NTP, StOMP, SP and OMP, in successfully locating the solution of linear inverse problems.

\noindent
\section*{Data availability} The real data are available from  \href{https://zhongfengsun.github.io/}{https://zhongfengsun.github.io/}.

\section*{Acknowledgments}
 This work was supported in part by the National Key R\&D Program of China (2023YFA1009302), the National Natural Science Foundation of China (12071307, 12371305 and 12371309), Guangdong Basic and Applied Basic Research Foundation (2024A1515011566),  Shandong Province Natural Science Foundation
(ZR2023MA020),  and  Domestic and Overseas  Visiting Program for  the Middle-aged and Young Key Teachers of  Shandong University of Technology. 

\section*{ORCID iDs}
Zhong-Feng Sun {\href{https://orcid.org/0000-0003-1794-7869}  {~https://orcid.org/0000-0003-1794-7869} }\\
 Yun-Bin Zhao {\href{https://orcid.org/0000-0002-2388-9047}  {~https://orcid.org/0000-0002-2388-9047} } \\
 Jin-Chuan Zhou {\href{https://orcid.org/0000-0002-2469-2123} {~https://orcid.org/0000-0002-2469-2123} } \\
Zheng-Hai, Huang {\href{https://orcid.org/0000-0003-2269-961X} {~https://orcid.org/0000-0003-2269-961X} }

\section*{References}

\end{document}